\numberwithin{equation}{section}
\newtheorem{theorem}{Theorem}[section]
\newtheorem{corollary}[theorem]{Corollary}
\newtheorem{proposition}[theorem]{Proposition}
\theoremstyle{definition}
\newtheorem{remark}[theorem]{Remark}
\newtheorem{remarks}[theorem]{Remarks} 
\newtheorem{assumption}[theorem]{Assumption}
\newcommand{\C}{\mathbb{C}}
\renewcommand{\epsilon}{\varepsilon}
\newcommand{\N}{\mathbb{N}}
\renewcommand{\phi}{\varphi}
\newcommand{\R}{\mathbb{R}}
\newcommand{\w}{\mathrm{w}}
\newcommand{\Z}{\mathbb{Z}}
\newcommand{\m}{{\scriptscriptstyle-}}
\newcommand{\pp}{{\scriptscriptstyle+}}
\newcommand{\ppm}{{\scriptscriptstyle\pm}}
\newcommand{\mpp}{{\scriptscriptstyle\mp}}
\author[L.\ Fanelli]{Luca Fanelli}
\address[Luca Fanelli]{
Ikerbasque, Basque Foundation for Science,
48011 Bilbao, Spain,
\newline \phantom{\quad} 
Universidad del Pa\'is Vasco / Euskal Herriko Unibertsitatea,
48080 Bilbao, Spain
\newline \phantom{\quad} \&
BCAM -- Basque Center for Applied Mathematics,
48009 Bilbao, Spain}
\email{\href{mailto:luca.fanelli@ehu.es}{luca.fanelli@ehu.eus}}
\author[Hynek \ Kova\v{r}\'{\i}k]{Hynek Kova\v{r}\'{\i}k}
\address[Hynek Kova\v{r}\'{\i}k]{
DICATAM, Sezione di Matematica
Universit\'a degli studi di Brescia
Via Branze, 38 - 25123
Brescia, ITALY}
\email{\href{mailto:hynek.kovarik@unibs.it}{hynek.kovarik@unibs.it}}
\keywords{Pauli operator, Magnetic Schr\"odinger operator, Hardy Inequality}
\subjclass[2020]{Primary: 35R45; Secondary: 47A63}
\begin{document}

\title[Quantitative Hardy inequality for magnetic Hamiltonians ]{Quantitative Hardy inequality for magnetic Hamiltonians }
%\author{Luca Fanelli and Hynek Kova\v{r}\'{\i}k}

\maketitle

\begin{abstract}
In this  paper we present a new method of proof of Hardy type inequalities for two-dimensional quantum Hamiltonians with a magnetic field of finite flux. 
Our approach gives a quantitative lower bound on the 
best constant in these inequalities both for Schr\"odinger and Pauli operators. Pauli  operators with Aharonov-Bohm magnetic field are discussed as well.
\end{abstract}

%%%%%%%%%%%%%%%%%%%%%%%%%%%%%%%%%%%%%%%%%%%%%%%%%%%%%%%%%%%%%%%%%%%%%%%%%%
%%%%%%%%%%%%%%%%%%%%%%%%%%%%%%%%%%%%%%%%%%%%%%%%%%%%%%%%%%%%%%%%%%%%%%%%%%
\section{\bf Introduction} 
The history of Hardy inequalities for magnetic Dirichlet forms goes back to the pioneering works by Laptev and Weidl \cite{lw,weidl, weidl2}. 
The latter show, roughly speaking,  that introducing a magnetic field $B:\R^2\to \R$ turns a critical operator, i.e.~two-dimensional Laplacian, into a subcritical operator,
 i.e.~two-dimensional magnetic Laplacian $(i\nabla +A)^2$, which can be understood as a mathematical manifestation of the diamagnetic effect. Here $A\in L^2_{\rm loc}(\R^2; \R^2)$ is a vector potential such that $\nabla\times A =B$ in the sense of distributions. The fields $A$ and $B$ have the physical interpretation of the magnetic potential and magnetic field, respectively. For magnetic fields with finite 
 normalized flux
 \begin{equation}
\alpha= \frac{1}{2\pi} \int _{\R^2}B\, dx 
\end{equation}
the above phenomenon is expressed by a lower bound on the associated quadratic form in the following way. If $B\neq 0$, then there exists a constant $C(B)>0$ such that 
\begin{equation} \label{laptev-weidl}
\int _{\R^2}  |(i\nabla +A) u|^2 \, dx \, \geq \,  C(B)
\left\{  \begin{array}{ll}
\displaystyle  \int_{\R^2} \frac{|u|^2}{1+|x|^2}\, dx &\ \mbox{if $\alpha\not\in\Z$},\\[18pt]
\displaystyle \int_{\R^2} \frac{|u|^2}{1+|x|^2 \log^2|x|}\, dx & \ \mbox{if $\alpha\in\Z$}
\end{array}
\right.
\end{equation}
for all  $u\in C_0^\infty(\R^2)$. It is important to recall that no lower bound with a non-negative and nonzero integral weight as above holds if $B=0$. 

To prove the above inequality, Laptev and Weidl write the left hand side of \eqref{laptev-weidl} in polar coordinates
(cf.~the first term in \eqref{polar-coord} below), after choosing a suitable gauge: then the integral weight on the right hand side of \eqref{laptev-weidl} arises from
the contribution of the azimuthal part of the magnetic Laplacian. This approach, with suitable modifications, 
was later applied also in settings in which the threshold of the essential spectrum is positive, see \cite{ek}, or in higher dimensions where the magnetic field gives an extra contribution to an 
already existing Hardy weight, \cite{ep, ck}. For further generalizations of \eqref{laptev-weidl} we refer to \cite{bls, et} and to the monograph \cite{bel}.

To the best of our knowledge, the only case in which the constant $C(B)$ 
is known explicitly is the case of an Aharonov-Bohm type magnetic field, namely the following vector potential
\begin{equation*}
A_{\rm ab}(x,y)=\alpha\, \frac{(-y,x)}{x^2+y^2},
\end{equation*}
introduced in \cite{ab} to give a mathematical description of the observed quantum phenomenon 
for which a charged particle can be scattered by an electromagnetic field even if it is confined in a region in which the field is null. In this case, inequality \eqref{laptev-weidl}  takes the 
form
\begin{equation} \label{ab-weight}
\int _{\R^2}  |(i\nabla +A_{\rm ab})\, u|^2 \, dx \, \geq \,   \mu_\alpha^2 \int_{\R^2} \frac{|u|^2}{|x|^2}\, dx \qquad  \forall\, u \in C_0^\infty(\R^2\setminus \{0\}) ,
\end{equation}
cf.~\cite{bal,fklv,lw}, where 
\begin{equation}\label{mu}
\mu_\alpha := \min_{m\in\Z} |m-\alpha| 
\end{equation}
denotes the distance between $\alpha$ and the set of integers.  
Moreover, the constant $ \mu_\alpha^2$ in \eqref{ab-weight} is sharp. 
Notice that the Aharonov-Bohm magnetic field is extremely singular, and the associated vector potential is not locally square integrable. Recently, a similar effect has been showed to hold in \cite{fklv}, for higher-dimensional generalizations of the Aharonov-Bohm field of the form 
\begin{equation*}
A:\R^d\to\R^d,
\qquad
A(x_1,\dots,x_d)=\alpha\, \frac{(-x_2,x_1,0,\dots,0)}{x_1^2+x_2^2}.
\end{equation*}

The purpose of this paper is twofold. First, to find a quantitative lower bound on the constant $C(B)$ in estimate \eqref{laptev-weidl}. Second, to extend the resulting Hardy inequality to the Pauli 
operator
\begin{equation} \label{pauli-operator}
P(A)=  \begin{pmatrix}
H_+(A)& 0 \\
 0  &   H_-(A)
\end{pmatrix}\,  , \qquad H_\pm(A) =  (i\nabla +A)^2  \pm  B
\end{equation}
in  $L^2(\R^2;\C^2)$. The Pauli operator, contrary to the magnetic Schr\"odinger operator, takes into account the interaction between the magnetic filed and the spin of the particle
under consideration, and hence represents a more realistic model from the physical point of view.  The well-known Aharonov-Casher Theorem, see e.g.~\cite{cfks, ev}, implies that if $\alpha\neq 0$, 
then one of the components $H_+(A)$ and $H_-(A)$, depending on the sign of $\alpha$, is subcritical and therefore must satisfy a Hardy type inequality. We remark that a quantitative knowledge of a Hardy inequality is usually fundamental in order to address answers to some basic questions about the spectrum of those Hamiltonians, see e.g. the recent paper \cite{cfk}, related to the Pauli operator.

Although it might seem natural to extend inequality  \eqref{laptev-weidl} to the Pauli operator $P(A)$ following the technique by Laptev and Weidl \cite{lw}, unfortunately this could only work for 
magnetic fields which don't change sign, and more importantly, we would loose the track of the constant $C(B)$.

Instead, we show that proceeding the other way around, i.e.~proving the Hardy inequality for the Pauli operator first and then extending it to the magnetic Laplacian, turns out to be a much more efficient strategy. 
In fact we will show that working directly with the Pauli operator allows us to develop a new method of proof in which the problem is reduced to a family of one-dimensional weighted inequalities. 
With the help of well-known results established in the literature \cite{muck, tom, flw} (see Appendix \ref{sec-app1}), we then calculate the sharp constants in these inequalities. 
The advantage of this approach is that it gives us much more information about the constant in the resulting Hardy inequality. Moreover, it singles out explicitly the distance between the flux and 
the set of integers, which in inequality \eqref{laptev-weidl} appears only implicitly.
 In particular, in the case of non-integer flux the Hardy weight is given by
\begin{equation} \label{weight}
 \mu_\alpha^2\, \frac{\beta(B;\rho)}{\rho^2+|x|^2}\, ,
\end{equation}
where $\rho>0$ is arbitrary, and where $\beta(B;\rho) \leq 1$ depends explicitly on $B$ and $\rho$ (see Theorem \ref{thm-hardy} below for details). 

In Proposition \ref{prop-radial} we calculate the 
constant  $\beta(B;\rho)$ for radial compactly supported magnetic fields.  The optimal value of $\beta(B;\rho)$, 
which is $1$, is never achieved, but a family of magnetic fields can be constructed in such a way that the corresponding constant converges asymptotically to $1$ (see Proposition \ref{prop-sharp}). 

Similar result for integer fluxes is given in Theorem \ref{thm-hardy-integer}. Moreover, the positivity of the Pauli operator allows to carry over the Hardy inequality also to the magnetic Laplacian at the cost 
of multiplying the integral weight by $1/2$ (see Corollary \ref{cor-schr}). 

In the closing Section \ref{sec-ab} we consider the Pauli operator with Aharonov-Bohm magnetic field. It turns out that the Hardy weight 
in this case is the same as for the magnetic Laplacian, cf.~\eqref{ab-weight}, and that the corresponding constant id sharp. This is proved in Theorem \ref{thm-hardy-ab}.

%%%%%%%%%%%%%%%%%%%%%%%%%%%%%%%%%%%%%%%%%%%%%%%%%%%%%%%%%%%%%%%%%%%%%%%%%%
%%%%%%%%%%%%%%%%%%%%%%%%%%%%%%%%%%%%%%%%%%%%%%%%%%%%%%%%%%%%%%%%%%%%%%%%%%
\section{\bf A Hardy inequality for the Pauli operator}
\label{sec-regular}

 Throughout the manuscript we work under the following condition:

\begin{assumption}\label{ass-B}
Assume there exists $p>2$ such that $B\in L^p_{\rm loc}(\R^2)$. Assume in addition that there exists  $\tau >2$ such that 
\begin{equation} \label{B-decay-cond}
|B(x)| \, = \, \mathcal{O}(|x|^{-\tau}) \qquad |x| \to \infty.
\end{equation}
\end{assumption}
We now define
\begin{equation} \label{h-eq}
h(x) =- \frac{1}{2\pi} \int_{\R^2} B(y) \log |x-y|\, dy,
\end{equation}
so that the vector field given by
\begin{equation} \label{A-h-def} 
A_h = (\partial_{x_2} h, - \partial_{x_1} h)
\end{equation}
satisfies $\nabla\times  A_h = B$. Thanks to Assumption \ref{ass-B} we have 
\begin{equation} \label{A-L-infty}
|A_h| \in L^\infty(\R^2).
\end{equation} 
Notice that 
\begin{equation} \label{exp-h-asymp}
e^{h(x)} = |x|^{-\alpha}\big(1+ \mathcal{O}(|x|^{-1})\big) , \qquad |x|\to \infty.  
\end{equation} 
For an arbitrary $\rho\in (0,\infty)$ we define the functions $g_\ppm:\R_+\to \R_+$ by
\begin{equation} \label{f-pm}
g_\pp(\rho\, ;r) =
 \left\{  \begin{array}{ll}
\displaystyle 1 &\ \mbox{if $|x| \leq \rho$},\\[12pt]
\displaystyle \rho^{-\alpha}\, |x|^\alpha & \ \mbox{if $ |x| > \rho $}. 
\end{array}
\right.
, 
 \qquad 
g_\m(\rho\, ; r) = \frac{1}{g_\pp(\rho\, ;r)}\, .
\end{equation}
In view of \eqref{exp-h-asymp} it follows that the constants $k_\ppm(\rho)$ and $K_\ppm(\rho)$ defined by  
\begin{equation} \label{h-bounds}
k_\ppm(\rho) := \inf_{x\in \R^2} \frac{e^{\mpp h(x)}}{g_\ppm(\rho\, ; |x|)} \qquad \text{and}  \qquad K_\ppm(\rho) :=  \sup_{x\in \R^2} \frac{e^{\mpp h(x)}}{g_\ppm(\rho\, ;|x|)}
\end{equation}
depend only on $B$ and satisfy
$$
0<k_\ppm(\rho) \leq K_\ppm(\rho) < \infty \qquad \forall \, \rho >0.
$$
Moreover, let 
\begin{equation} \label{beta}
\beta_\pm(B;\rho) = \left(\frac{k_\ppm(\rho)}{K_\ppm(\rho)}\right)^2\, \in (0,1).
\end{equation}
We now introduce
\begin{equation}
Q_\ppm[u] = \int _{\R^2} \big (\, |(i\nabla +A_h) u|^2 \pm B |u|^2 \big )\, dx, \qquad u\in d(Q_\ppm)
\end{equation}
the quadratic form associated to the components of the Pauli operator in the gauge \eqref{A-h-def}. From Assumption \ref{ass-B}, equation \eqref{A-L-infty} and 
the Sobolev imbedding theorem that the domain of $Q_\ppm$ satisfies
$$
d(Q_\ppm) = H^1(\R^2)\, .
$$
We are now ready to state our first result, concerned with the case of non integer flux.
\smallskip

\begin{theorem} \label{thm-hardy}
Let $B$ satisfy Assumption \ref{ass-B} and suppose that $0<\alpha \not\in\Z$. Let $A\in L^\infty(\R^2;\R^2)$ be such that $\nabla \times A=B$ in the sense of distributions.  Then
\begin{align}
Q_\pp[u] \, &\geq \,  \mu^2_\alpha\, \beta_\pp(B;\rho) \int_{\R^2} \frac{|u|^2}{\rho^2+|x|^2}\, dx \label{main-1} 
\end{align}
holds for all $u\in H^1(\R^2)$ and all $\rho>0$ with $\beta_\pp(B;\rho)$ is defined in \eqref{beta}.
\end{theorem}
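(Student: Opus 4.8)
The plan is to exploit the supersymmetric factorization of the Pauli operator \eqref{pauli-operator} to absorb the zeroth-order term $+B$ into a first-order operator, and then to reduce the resulting magnetic-field-free weighted Dirichlet form to a family of one-dimensional Hardy inequalities indexed by the angular momentum, whose sharp constants are supplied by Appendix~\ref{sec-app1}. First I would fix the gauge: since $Q_\pp$ is the form in the gauge $A_h$ and any admissible $A$ differs from $A_h$ by a gradient $\nabla\chi$ with $\chi$ Lipschitz (as $\R^2$ is simply connected and $A-A_h\in L^\infty$ is curl-free), the transformation $u\mapsto e^{i\chi}u$ is unitary on $L^2(\R^2)$, maps $H^1$ onto itself, and leaves both $\int_{\R^2}|(i\nabla+A)u|^2\,dx\pm\int_{\R^2}B|u|^2\,dx$ and $\int_{\R^2}|u|^2/(\rho^2+|x|^2)\,dx$ invariant; hence there is no loss in taking $A=A_h$. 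Writing $D_j=i\partial_{x_j}+A_{h,j}$, $T=D_1-iD_2$, the relation $[D_1,D_2]=iB$ gives $T^*T=D_1^2+D_2^2+B=H_\pp(A_h)$, so $Q_\pp[u]=\|Tu\|_{L^2}^2$. With $A_h=(\partial_{x_2}h,-\partial_{x_1}h)$ and $\partial_z=\tfrac12(\partial_{x_1}-i\partial_{x_2})$ a short computation yields $T=2i\,e^{-h}\partial_z\big(e^{h}\,\cdot\,\big)$, whence
\[
Q_\pp[u]=4\int_{\R^2}e^{-2h}\,\big|\partial_z\big(e^{h}u\big)\big|^2\,dx .
\]

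Next I would substitute $w=e^{h}u$, so that $|u|^2=e^{-2h}|w|^2$ and \eqref{main-1} becomes $4\int e^{-2h}|\partial_z w|^2\,dx\ge\mu_\alpha^2\beta_\pp\int e^{-2h}|w|^2/(\rho^2+|x|^2)\,dx$. By \eqref{h-bounds} one has the pointwise sandwich $k_\pp(\rho)\,g_\pp(\rho;|x|)\le e^{-h}\le K_\pp(\rho)\,g_\pp(\rho;|x|)$; bounding $e^{-2h}$ below by $k_\pp^2 g_\pp^2$ on the left and above by $K_\pp^2 g_\pp^2$ on the right and recalling $\beta_\pp=(k_\pp/K_\pp)^2$ from \eqref{beta}, the factors $k_\pp^2$ cancel and it suffices to prove the $B$-free inequality
\[
4\int_{\R^2}g_\pp^2\,|\partial_z w|^2\,dx\ \ge\ \mu_\alpha^2\int_{\R^2}\frac{g_\pp^2\,|w|^2}{\rho^2+|x|^2}\,dx ,
\]
with the radial weight $g_\pp=g_\pp(\rho;|x|)$ of \eqref{f-pm}. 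Because $g_\pp$ is radial I would pass to polar coordinates and expand $w(r,\theta)=\sum_{m\in\Z}w_m(r)\,e^{im\theta}$; using $\partial_z=\tfrac12 e^{-i\theta}(\partial_r-\tfrac{i}{r}\partial_\theta)$ one gets $\partial_z w=\tfrac12\sum_m e^{i(m-1)\theta}\big(w_m'+\tfrac{m}{r}w_m\big)$, and orthogonality of the angular modes diagonalises both sides, reducing everything to the one-dimensional inequalities
\[
\int_0^\infty g_\pp^2\,\Big|w_m'+\tfrac{m}{r}w_m\Big|^2\,r\,dr\ \ge\ \mu_\alpha^2\int_0^\infty\frac{g_\pp^2\,|w_m|^2}{\rho^2+r^2}\,r\,dr ,\qquad m\in\Z.
\]

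To locate the threshold \eqref{mu}, set $w_m=r^{-m}\phi_m$, so that $w_m'+\tfrac{m}{r}w_m=r^{-m}\phi_m'$ and the above becomes $\int_0^\infty U|\phi_m'|^2\,dr\ge\mu_\alpha^2\int_0^\infty V|\phi_m|^2\,dr$ with $U=g_\pp^2 r^{1-2m}$ and $V=U/(\rho^2+r^2)$. In the outer region $r>\rho$, where $g_\pp^2=(r/\rho)^{2\alpha}$ and $\rho^2+r^2\sim r^2$, this is the pure-power Hardy inequality $\int r^{a}|\phi_m'|^2\ge C\int r^{a-2}|\phi_m|^2$ with $a=2\alpha+1-2m$, whose sharp constant is $C=\big((a-1)/2\big)^2=(m-\alpha)^2$; minimising over $m\in\Z$ gives exactly $\min_m(m-\alpha)^2=\mu_\alpha^2$, which is where the arithmetic distance between $\alpha$ and $\Z$ enters.

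The main obstacle is to upgrade this computation to the genuine, regime-switching weights: because $g_\pp^2/(\rho^2+r^2)$ is not a single power but changes behaviour across $r=\rho$, Muckenhoupt's functional $\sup_{r>0}\big(\int_r^\infty V\big)\big(\int_0^r U^{-1}\big)$ only pins the constant up to a universal factor, and extracting the \emph{exact} value $\mu_\alpha^2$ requires the sharp weighted Hardy inequalities of \cite{muck,tom,flw} recalled in Appendix~\ref{sec-app1}, applied with the correct endpoint behaviour (the modes $m\neq0$ vanishing at $r=0$, and no surviving boundary terms at $0$ or $\infty$). A comparison of the actual weights with the pure-power ones — $V$ is smaller (from $r^{-2}$ to $(\rho^2+r^2)^{-1}$) while $U$ is larger on $\{r\le\rho\}$ (from $(r/\rho)^{2\alpha}$ to $1$) — indicates that the threshold cannot fall below $(m-\alpha)^2$, and the Appendix then delivers constant $\ge(m-\alpha)^2\ge\mu_\alpha^2$ for every $m$, which is precisely \eqref{main-1}. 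A subsidiary point to verify is that $w=e^{h}u\in H^1(\R^2)$ with $g_\pp\,\partial_z w\in L^2$, which justifies the angular decomposition; this follows from $|A_h|,|\nabla h|\in L^\infty$ (see \eqref{A-L-infty}) and the decay $e^{h}=\mathcal{O}(|x|^{-\alpha})$ from \eqref{exp-h-asymp}, together with the density of $C_0^\infty(\R^2)$ in $H^1(\R^2)$ guaranteed by Assumption~\ref{ass-B}.
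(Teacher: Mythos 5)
Your reduction coincides step by step with the paper's own proof: gauge fixing to $A_h$, the substitution $v=e^{h}u$ and the factorization identity \eqref{form-factor}, the angular decomposition \eqref{eq-timo}, and the sandwich $k_\pp(\rho)\,g_\pp\le e^{-h}\le K_\pp(\rho)\,g_\pp$ which produces the factor $\beta_\pp(B;\rho)$ — all of this is exactly what the paper does, and it is correct. The gap is in the only step that carries the analytic weight of the theorem: the one-dimensional inequalities for each mode $m$ are not actually proved. Your argument for them is a comparison with pure powers: with $U=g_\pp^2\,r^{1-2m}$, $U_0=\rho^{-2\alpha}r^{2\alpha+1-2m}$, $V=U/(\rho^2+r^2)$ and $V_0=U_0/r^2$, you rely on the chain $\int U|\phi'|^2\ge\int U_0|\phi'|^2\ge(m-\alpha)^2\int V_0|\phi|^2\ge(m-\alpha)^2\int V|\phi|^2$, which needs the two pointwise bounds $U\ge U_0$ \emph{and} $V\le V_0$. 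The first holds, but the second fails whenever $\alpha>1$: on $\{r\le\rho\}$ one computes $V/V_0=(r/\rho)^{2-2\alpha}\,\rho^2/(\rho^2+r^2)\to\infty$ as $r\to0$. Since the theorem assumes only $0<\alpha\notin\Z$, fluxes $\alpha>1$ are part of the statement, and for them your concluding argument is not a proof. (For $0<\alpha<1$ your comparison chain is in fact complete and arguably cleaner than the paper's case analysis.)

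Nor can you fall back on the assertion that "the Appendix then delivers constant $\ge(m-\alpha)^2$ for every $m$". For the modes $1\le m<\alpha$ (nonempty exactly when $\alpha>1$) neither Theorem \ref{thm-class-1} nor Theorem \ref{thm-class-2} applies to your weights: the functional in \eqref{C-upperb-3} is infinite because $\int_s^\infty V\,dt$ diverges at infinity (there $V\sim\rho^{-2\alpha}r^{2\alpha-1-2m}$ with exponent $>-1$), while the functional in \eqref{C-upperb-1} is infinite because $\int_0^s V\,dt$ diverges at the origin (there $V\sim\rho^{-2}r^{1-2m}$ with exponent $\le-1$); the same failure occurs with the weight $U/r^2$. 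This is precisely what the paper calls the most delicate case, and it is resolved by an idea absent from your proposal: cut the half-line at $r=\rho$ and apply Corollary \ref{cor-class-2} on $(0,\rho)$ and Corollary \ref{cor-class-1} on $(\rho,\infty)$, using that $f_m$ vanishes both at $0$ and at infinity. Moreover, even where the Appendix theorems do apply, they do not yield $(m-\alpha)^2$ for every $m$: the $m=0$ mode (which does not vanish at the origin) gives $\min\{1,\alpha^2\}$ and the modes $m<0$ give $1$; inequality \eqref{main-1} still follows only because $\mu_\alpha\le1/2$, so each of these constants dominates $\mu_\alpha^2$. To close your proof you need this mode-by-mode case analysis — in particular the splitting at $r=\rho$ for $0<m<\alpha$ and the separate treatment of $m=0$ — which is exactly the content of the paper's argument.
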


\begin{proof}
Owing to the gauge invariance of the Hardy inequality, we may assume that $A=A_h$, cf.~\eqref{A-h-def}.
By density it suffices to prove the claim for any  $u\in C_0^\infty(\R^2)$. 

We write $v = e^{h} u$ so that $v \in C_0^1(\R^2)$, and a  standard calculation shows that
\begin{equation} \label{form-factor}
Q_\pp[u] = \int_{\R^2} e^{-2h} \, |(\partial_{x_1} - i \partial_{x_2}) v |^2\, dx  
\end{equation}
Now let us expand $v$ into its Fourier series 
\begin{equation} \label{v-fourier}
v(x) = \sum_{m\in\Z} e^{im\theta}\, v_m(r),
\end{equation}
where $v_m \in C_0^1[0,\infty)$, and where $r=|x|, \, \theta =\arg(x_1+ix_2)$. Notice that in view of the regularity of $v_m$ we have
\begin{equation} 
v_m(r) = \frac{1}{2\pi} \int_0^{2\pi}\!\! e^{-im\theta}\, v(r,\theta)\, d\theta=  \mathcal{O}(r) \qquad r\to 0, \qquad \forall\, m\neq 0.
\end{equation}
Since 
$$
(\partial_{x_1} - i \partial_{x_2}) i \theta= e^{- i\theta}\, |x|^{-1} \qquad \text{and} \qquad (\partial_{x_1} - i \partial_{x_2}) v_m(r) =  e^{-i\theta}\, v'_m(r) ,
$$
we get, similarly as in \cite[Sec.~10]{weidl}, 
\begin{equation}  \label{eq-timo}
\int_0^{2\pi} |(\partial_{x_1} - i \partial_{x_2}) v |^2\, d\theta=  2\pi  \sum_{m\in\Z}  \Big | v'_m(r) + \frac{m\, v_m(r)}{r} \Big|^2=  2\pi  \sum_{m\in\Z}  r^{-2m}\, \big | \partial_r( r^{m}\, v_m(r))\big|^2\, .
\end{equation} 
Writing $f_m= r^{m} v_m$ we thus conclude with the following lower bound,
\begin{equation} \label{lowerb-1}
Q_\pp[u]\, \geq\, 2\pi k_\pp^2(\rho)\,   \sum_{m\in\Z} \int_0^\infty  g^2_\pp(\rho\, ;r) \,  r^{1-2m}\, |f'_m(r)|^2\, dr\, , 
\end{equation} 
which holds for all $\rho>0$. 
Our next goal is to show that for any fixed $m\in\Z$, 
\begin{equation} \label{hardy-m-fixed}
 \int_0^\infty  g^2_\pp(\rho\, ;r) \,  r^{1-2m}\, |f'_m(r)|^2\, dr\,  \geq c (m,\alpha)  \int_0^\infty  \frac{g^2_\pp(\rho\, ;r) \,  r^{1-2m}}{\delta_{m,0} \rho^2+r^2}\, |f_m(r)|^2\, dr,
\end{equation}
where $\delta_{j,k}$ is the Kronecker delta and where $c(m,\alpha)$ is a constant independent of $\rho$.  We now aim to find an explicit expression for $c(m,\alpha)$ in \eqref{hardy-m-fixed}. 
To do so we will use the weighted one-dimensional Hardy inequalities stated in Appendix \ref{sec-app1} with the weight function
$$
V(t)= g^2_\pp(\rho\, ;t)\, t^{1-2m},
$$ 
see Theorems \ref{thm-class-1}, \ref{thm-class-2} and Corollaries \ref{cor-class-1}, \ref{cor-class-2}. We will distinguish different cases depending on the value of $m$. Note that 
\begin{equation} \label{vanish-infty}
\lim_{t\to \infty} f_m(t)=0 \qquad \forall\, m\in\Z.
\end{equation}

\smallskip

\underline{$0<\alpha <m:$}  in addition to \eqref{vanish-infty}, it holds $\lim_{t\to 0} f_m(t)=0$. We apply Theorem \ref{thm-class-2} with  $W(t) = t^{-2}\, V(t)$. Since
\begin{equation} \label{V-1}
\int_0^s V(t)^{-1}\, dt  =  \rho^{2m}\, \frac{s^{2m}}{2m}\,  \chi_{(0,\rho]}(s) + \rho^{2m}\, \Big( \frac{s^{2m-2\alpha}}{2(m-\alpha)}  -\frac{\alpha}{2m(m-\alpha)}\Big)\, \chi_{(\rho,\infty)}(s) ,
\end{equation}
and
\begin{equation} 
\int_s^\infty W(t)\, dt   = \rho^{-2m} \Big( \frac{s^{-2m}}{2m} + \frac{\alpha}{2m(m-\alpha)}\Big)\chi_{(0,\rho]}(s)  +   \rho^{-2m}  \frac{s^{2\alpha-2m}}{2(m-\alpha)} \, \chi_{(\rho,\infty)}(s),
\end{equation}
we get 
\begin{equation} 
 \sup_{s>0} \Big (\int_0^s V(t)^{-1}\, dt \Big)  \Big (\int_s^\infty W(t)\, dt \Big)  \leq \frac{1}{4(m-\alpha)^2}\, .
\end{equation}
 Altogether we thus deduce from Theorem \ref{thm-class-2} that 
\begin{equation} \label{hardy-m-bigger-alpha}
\int_0^\infty   g^2_\pp(\rho\, ;r)\, \,  r^{1-2m}\, |f'_m(r)|^2\, dr\,  \geq\, (m-\alpha)^2  \int_0^\infty \frac{ g^2_\pp(\rho\, ;r) }{r^2}\, |v_m(r)|^2\, r dr \qquad \forall \, m>\alpha, \ \rho>0.
\end{equation}

\smallskip

\underline{$m= 0$:} \,  we use Theorem \ref{thm-class-1} with $W(t) =   (\rho^2+t^2)^{-1}\, V(t)$. A short calculation gives 
\begin{equation*} 
\int_s^\infty V(t)^{-1}\, dt  =   \Big( \frac{1}{2\alpha} - \log (s/\rho) \Big)\, \chi_{(0,\rho]}(s)  +   \frac{s^{-2\alpha} \rho^{2\alpha}}{2\alpha}\, \chi_{(\rho,\infty)}(s) \, .
\end{equation*}
 Moreover, if $s\leq \rho$, then 
\begin{equation*} 
\int_0^s W(t)\, dt  = \int_0^s \frac{t}{\rho^2+t^2}\, dt \, =  \frac 12 \log\Big(1+\frac{s^2}{\rho^2}\Big)  \leq \frac{\log 2}{2} \qquad s \leq \rho\, .
\end{equation*}
On the other hand, for $s> \rho$ we have 
\begin{align} \label{s>rho} 
\int_0^s W(t)\, dt  & \leq \int_0^\rho \frac{t\, dt}{\rho^2+t^2}+ \rho^{2\alpha}   \int_\rho^s \frac{t^{1-2\alpha}\, dt }{t^2} = \frac{\log 2}{2} +\frac{1}{2\alpha} \big(1-s^{-2\alpha} \rho^{2\alpha}\big)
\qquad s > \rho\, .
\end{align}
Hence
\begin{align*} 
 \sup_{s\leq \rho} \Big (\int_0^s V(t)^{-1}\, dt \Big)  \Big (\int_s^\infty W(t)\, dt \Big)  &= \frac12\,  \sup_{s\leq\rho} \Big(\frac{1}{2\alpha} - \log (s/\rho)\Big)\log\Big(1+\frac{s^2}{\rho^2}\Big)\\
 & =  \frac12\,  \sup_{s\leq 1} \Big(\frac{1}{2\alpha} - \log s\Big)\log(1+s^2) \, \leq\,  \frac12\,  \sup_{s\leq 1} \Big(\frac{1}{2\alpha} - \log s\Big)s^2\\[10pt]
 & = \frac 14 \left\{  \begin{array}{ll}
\displaystyle  \frac{1}{\alpha}  &\ \mbox{if $ \alpha\leq 1$},\\[12pt]
\displaystyle e^{\frac 1\alpha-1}  & \ \mbox{if $ \alpha >1 $}. 
\end{array}
\right.
\, \leq\, \frac{1}{4}\, \max \Big\{1\, , \,  \frac{1}{\alpha^2} \Big\} \, .
\end{align*}
Finally, for  $s> \rho$ we get, in view of \eqref{s>rho}, 
\begin{align*} 
 \sup_{s>\rho} \Big (\int_0^s V(t)^{-1}\, dt \Big)  \Big (\int_s^\infty W(t)\, dt \Big) \,  & \leq\  \sup_{s>\rho}\, \frac{\rho^{2\alpha}s^{-2\alpha}}{4\alpha} \Big(\frac{1}{\alpha} +\log 2 -	\frac{\rho^{2\alpha}s^{-2\alpha}}{\alpha} \Big) \\
 & =  \frac{1}{4\alpha}\,  \sup_{x>1}\,  x\,  \Big(\frac{1}{\alpha} +\log 2 -\frac{x}{\alpha} \Big)   =   \frac{1}{4\alpha^2}\, \max \Big\{ \frac{(1+\alpha \log 2)^2}{4} \, , \, \log 2\Big\} \\[5pt]
 & \leq  \frac{1}{4}\, \max \Big\{1\, , \,  \frac{1}{\alpha^2} \Big\} \, .
\end{align*}
We then deduce from  Theorem \ref{thm-class-1}  that
\begin{equation} \label{hardy-m-zero}
\int_0^\infty  g^2_\pp(\rho\, ;r)\,   |f'_0(r)|^2\, r dr\,  \geq\, \min \big\{ 1, \alpha^2\big\}\,  \int_0^\infty \frac{ g^2_\pp(\rho\, ;r) }{\rho^2+r^2}\  |v_0(r)|^2\, r dr \qquad \forall \ \rho>0.
\end{equation}

\smallskip

\underline{$m < 0$:}  we have
\begin{equation}  \label{V-1-upperb}
\int_s^\infty V(t)^{-1}\, dt  =   \Big( \frac{\alpha\, \rho^{2m}}{2m(\alpha-m)} -\frac{s^{2m}}{2m}\Big)\,   \chi_{(0,\rho]}(s)  +  \frac{\rho^{2\alpha}\,  s^{2m-2\alpha}}{2(\alpha -m)}\, \chi_{(\rho,\infty)}(s) .
\end{equation}
On the other hand, setting $W(t) = t^{-2}\, V(t)$ gives
\begin{equation*} 
\int_0^s W(t)\, dt  = -\frac{s^{-2m}}{2m} \,   \chi_{(0,\rho]}(s)    + \Big( \, \frac{\rho^{-2\alpha} s^{2\alpha-2m}}{2(\alpha-m)} -  \frac{\alpha\, \rho^{2m}}{2m(\alpha-m)}\Big)\, \chi_{(\rho,\infty)}(s) .
\end{equation*}
Hence
\begin{align*} 
 \sup_{s> \rho} \Big (\int_s^\infty V(t)^{-1}\, dt \Big)  \Big (\int_0^s W(t)\, dt \Big)  &=-\frac{1}{4m(\alpha-m)}\,  \leq\,  \frac{1}{4(\alpha-m)}\, \leq \frac 14\, .
 \end{align*}
 and
 \begin{align*} 
 \sup_{0<s\leq  \rho} \Big (\int_s^\infty V(t)^{-1}\, dt \Big)  \Big (\int_0^s W(t)\, dt \Big)  &=  \frac{1}{4m^2}\,  \leq\,  \frac{1}{4}\, .
 \end{align*}
Theorem \ref{thm-class-1} then gives 
\begin{equation} \label{hardy-m-negative}
\int_0^\infty   g^2_\pp(\rho\, ;r)\,   r^{1-2m}\, |f'_m(r)|^2\, dr\,  \geq\,      \int_0^\infty \frac{g^2_\pp(\rho\, ;r)}{r^2}\, |v_m(r)|^2\, r dr \qquad \forall \, m < 0, \ \rho>0.
\end{equation}

\smallskip

\underline{$0<m<\alpha:$} This is the most delicate case since neither Theorem \ref{thm-class-1} nor Theorem \ref{thm-class-2} are applicable with $V$ as above and with $W(t) = t^{-2}\,  V(t) $. Instead, we make use of Corollaries \ref{cor-class-1}  and  \ref{cor-class-2} with the choice $R=\rho$.. Note that for any $0<s<\rho$ we have 
$$
\int_0^s V(t)^{-1}\, dt  =  \frac{s^{2m}}{2m}  \qquad \text{and}\qquad \int_s^\rho W(t)\, dt  = \frac{s^{-2m} -\rho^{-2m}}{2m}. 
$$
Hence 
$$
 \sup_{0<s\leq  \rho} \Big (\int_s^\rho V(t)^{-1}\, dt \Big)  \Big (\int_0^s W(t)\, dt \Big) \, \leq\, \frac{1}{4m^2}\, ,
$$
and since $\lim_{t\to 0} f_m(t)=0,$ Corollary \ref{cor-class-2} implies that
\begin{equation} \label{hardy-m-positive-a}
\int_0^\rho   g^2_\pp(\rho\, ;r)\,   r^{1-2m}\, |f'_m(r)|^2\, dr\,  \geq\,  m^2\int_0^\rho \frac{g^2_\pp(\rho\, ;r)}{r^2}\, |v_m(r)|^2\, r dr \qquad  \forall \rho>0.
\end{equation}
On the other hand, if $s>\rho$, then
$$
\int_s^\infty V(t)^{-1}\, dt  = \rho^{2\alpha}\,  \frac{s^{-2\alpha+2m}}{2(\alpha-m)}  \qquad \text{and}\qquad \int_\rho^s W(t)\, dt  = \rho^{-2\alpha} \, \frac{s^{2\alpha-2m} -\rho^{2\alpha-2m}}{2(\alpha-m)}. 
$$
It follows that 
$$
 \sup_{s> \rho} \Big (\int_s^\infty V(t)^{-1}\, dt \Big)  \Big (\int_\rho^s W(t)\, dt \Big) \, \leq\, \frac{1}{4(m-\alpha)^2}\, ,
$$
Hence by Corollary \ref{cor-class-1} gives
\begin{equation} \label{hardy-m-positive-b}
\int_\rho^\infty   g^2_\pp(\rho\, ;r)\,   r^{1-2m}\, |f'_m(r)|^2\, dr\,  \geq\,  (m-\alpha)^2\int_\rho^\infty \frac{g^2_\pp(\rho\, ;r)}{r^2}\, |v_m(r)|^2\, r dr \qquad  \forall \rho>0.
\end{equation}
This together with \eqref{hardy-m-positive-a} implies 
\begin{equation} \label{hardy-m-positive-c}
\int_0^\infty  g^2_\pp(\rho\, ;r)\,  r^{1-2m}\, |f'_m(r)|^2\, dr\,  \geq\, (m-\alpha)^2 \int_0^\infty \frac{  g^2_\pp(\rho\, ;r) }{r^2}\, |v_m(r)|^2\, r dr \qquad  0<m < \alpha,
\end{equation}
If $\alpha\not\in\Z$, then \eqref{hardy-m-positive-b} combined with \eqref{hardy-m-bigger-alpha}, \eqref{hardy-m-zero} and \eqref{hardy-m-negative} gives 
\begin{align*}
\sum_{m\in\Z} \int_0^\infty  g^2_\pp(\rho\, ;r)  \,  r^{1-2m}\, |f'_m(r)|^2\, dr\,  & \geq\,   \mu^2_\alpha \sum_{m\in \Z} \, \int_0^\infty \frac{g^2_\pp(\rho\, ;r)  }{\rho^2+r^2}\,  |v_m(r)|^2\, r dr .
\end{align*}
The Parseval identity in combination \eqref{h-bounds} then implies that 
\begin{align*}
\sum_{m\in\Z} \int_0^\infty  g^2_\pp(\rho\, ;r) \,  r^{1-2m}\, |f'_m(r)|^2\, dr\,  &\geq   \frac{ \mu^2_\alpha}{2\pi}\,   \int_{\R^2}\frac{g^2_\pp(\rho\, ; |x|)}{\rho^2+|x|^2}\, |v(x)|^2\, dx  \geq 
 \frac{ \mu^2_\alpha}{2\pi\, K_\pp^2(\rho)} \int_{\R^2}\frac{|u(x)|^2 }{\rho^2+|x|^2}\,  dx.
\end{align*}
Inequality \eqref{main-1} thus follows from \eqref{lowerb-1}.
\end{proof}

\begin{remark}
Notice that 
$$
\beta_\pp(B;\rho) = \mathcal{O}(\rho^{2\alpha}) \qquad \rho\to 0 \, .
$$
 In particular $\beta_\pp(B;\rho)\to 0$ as $\rho \to 0$.  This is expected since otherwise by letting $\rho\to 0$ we would obtain, by monotone convergence, a Hardy inequality on $H^1(\R^2)$ with the integral weight proportional to $|x|^{-2}$, which is impossible.
\end{remark} 

\smallskip

\iffalse
\begin{corollary} \label{cor-hardy}
Let $B$ satisfy Assumption \ref{ass-B} and suppose that $0<\alpha \not\in\Z$. Let $A\in L^\infty(\R^2;\R^2)$ be such that $\nabla \times A=B$ in the sense of distributions.  
Let 
$$
C_\pp= \sup_{\rho>0} \beta_\pp(B;\rho) 
$$
Then
\begin{align}
Q_\pp[u] \, &\geq \, C_\pp\,  \mu^2_\alpha \int_{\R^2} \frac{|u|^2}{|x|^2}\, dx \qquad \ ,  \label{main-ort} 
\end{align}
holds for all  $u\in C^\infty_0(\R^2\setminus\{0\})$ satisfying
\begin{equation} \label{cond-ort}
\int_0^{2\pi} u(r,\theta)\, d\theta = 0 \qquad \forall\, r>0.
\end{equation}
\end{corollary}

\begin{proof} 
\texttt{pending}
\end{proof}

\smallskip

\fi

We now pass to the analogous result in the case of an integer flux.

\begin{theorem} \label{thm-hardy-integer}
Let $B$ satisfy Assumption \ref{ass-B} and suppose that $0\leq \alpha \in\Z$. Let $A\in L^\infty(\R^2;\R^2)$ be such that $\nabla \times A=B$ in the sense of distributions. Then the inequality
\begin{align} 
Q_\pp[u] \, & \geq \, \beta_\pp(B;\rho)\,  \frac{\alpha}{4\alpha+\pi}\,  \int_{\R^2} \frac{|u|^2}{\rho^2+|x|^2 \big(1+\log^2( |x|/\rho)\big)}\, dx \ \label{main-2}
\end{align}
holds for all $u\in H^1(\R^2)$ and for all $\rho>0$.
\end{theorem}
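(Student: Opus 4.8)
The plan is to follow verbatim the scheme of the proof of Theorem \ref{thm-hardy}. After the gauge reduction $A=A_h$ and the substitution $v=e^{h}u$, I would expand $v$ in its Fourier series \eqref{v-fourier}, apply the identity \eqref{eq-timo}, and arrive at the lower bound \eqref{lowerb-1}, thereby reducing the problem to a one–dimensional weighted inequality for each $f_m=r^{m}v_m$ with weight $V(t)=g_\pp^2(\rho\,;t)\,t^{1-2m}$. For every $m\neq\alpha$ one has $(m-\alpha)^2\geq 1$, and the computations already carried out in \eqref{hardy-m-bigger-alpha}, \eqref{hardy-m-zero}, \eqref{hardy-m-negative} and \eqref{hardy-m-positive-c} apply unchanged (they never used $\alpha\notin\Z$, only $m\neq\alpha$), producing a lower bound with constant $\geq 1$ and weight $(\rho^2+r^2)^{-1}$, or the even stronger $r^{-2}$. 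Since
\[
\frac{1}{\rho^2+r^2\bigl(1+\log^2(r/\rho)\bigr)}\ \leq\ \frac{1}{\rho^2+r^2}\ \leq\ \frac{1}{r^2},
\]
each of these bounds survives replacing the weight by the logarithmic one in \eqref{main-2}, with the same constant. The case $\alpha=0$ is trivial, as the right–hand side of \eqref{main-2} then vanishes.

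The only genuinely new ingredient is the critical mode $m=\alpha\geq 1$, where $(m-\alpha)^2=0$ and the argument of Theorem \ref{thm-hardy} breaks down: with $V(t)=g_\pp^2(\rho\,;t)\,t^{1-2\alpha}$ one has $V(t)=\rho^{-2\alpha}t$ for $t>\rho$, so $V^{-1}$ is not integrable at infinity and no inequality with weight $r^{-2}$ can hold for this mode. Instead I would prove the Hardy inequality directly for the weaker weight $W(t)=V(t)\bigl(\rho^2+t^2(1+\log^2(t/\rho))\bigr)^{-1}$ by invoking the Muckenhoupt criterion of Theorem \ref{thm-class-2}; this is legitimate because $f_\alpha(0)=0$ (since $v_\alpha=\mathcal O(r)$ near the origin, whence $f_\alpha=\mathcal O(r^{\alpha+1})$) together with \eqref{vanish-infty}, and because $W$ is integrable at infinity thanks to the logarithmic factor.

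To extract the explicit constant I would estimate $A=\sup_{s>0}\bigl(\int_0^s V^{-1}\bigr)\bigl(\int_s^\infty W\bigr)$. For $s>\rho$ the substitution $u=\log(t/\rho)$ gives $\int_0^sV^{-1}=\rho^{2\alpha}\bigl(\tfrac{1}{2\alpha}+\log(s/\rho)\bigr)$ and, bounding the denominator of $W$ from below by $t^2(1+\log^2(t/\rho))$, $\int_s^\infty W\leq\rho^{-2\alpha}\bigl(\tfrac{\pi}{2}-\arctan\log(s/\rho)\bigr)$. Writing $y=\log(s/\rho)>0$, the product is then bounded by
\[
\Bigl(\tfrac{1}{2\alpha}+y\Bigr)\Bigl(\tfrac{\pi}{2}-\arctan y\Bigr)\ \leq\ \frac{\pi}{4\alpha}+y\,\arctan\tfrac1y\ \leq\ \frac{\pi}{4\alpha}+1\ =\ \frac{4\alpha+\pi}{4\alpha}.
\]
A direct check shows that the range $0<s\leq\rho$ yields a strictly smaller product (dominated by $\tfrac{\pi}{4\alpha}$ up to a term vanishing as $s\to 0$, with the behaviour differing slightly for $\alpha=1$ and $\alpha\geq 2$), so $A\leq\frac{4\alpha+\pi}{4\alpha}$. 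Theorem \ref{thm-class-2} then gives the one–dimensional inequality for the mode $m=\alpha$ with constant $\tfrac{1}{4A}\geq\frac{\alpha}{4\alpha+\pi}$.

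Combining the modes, the binding constant is the one from $m=\alpha$, namely $\frac{\alpha}{4\alpha+\pi}<1$, all remaining modes satisfying stronger bounds. Summing over $m\in\Z$, applying the Parseval identity exactly as at the end of the proof of Theorem \ref{thm-hardy}, and using the definitions \eqref{h-bounds} and \eqref{beta} to pass from $v$ back to $u$, I would obtain \eqref{main-2} with the factor $\beta_\pp(B;\rho)$. The main obstacle is precisely the critical mode $m=\alpha$: one must recognise that the logarithmic weight is forced there and then carry out the Muckenhoupt estimate sharply enough to produce the clean constant $\frac{\alpha}{4\alpha+\pi}$; a minor technical nuisance is the near–origin analysis of $A$, which is handled slightly differently for $\alpha=1$ and $\alpha\geq 2$ but is in both cases dominated by the tail contribution computed above.
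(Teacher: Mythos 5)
Your proposal is correct and follows essentially the same route as the paper: the modes $m\neq\alpha$ are disposed of by the estimates already established for Theorem \ref{thm-hardy} (with constants $\geq 1$ since $\alpha\in\Z$, $\alpha\geq 1$) together with the pointwise monotonicity of the weights, and the critical mode $m=\alpha$ is treated with the logarithmic weight via Theorem \ref{thm-class-2}, with the identical $\arctan$ computation yielding $\sup_{s>\rho}\leq \pi/(4\alpha)+1$ and hence the constant $\alpha/(4\alpha+\pi)$. One cosmetic slip: on $(0,\rho]$ the Muckenhoupt product tends to $\log 2/(4\alpha)$ (not to something negligible) as $s\to 0$, but it is still bounded by $(\pi+\log 2)/(4\alpha)\leq \pi/(4\alpha)+1$, exactly as in the paper, so your conclusion stands.
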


\begin{proof}
If $\alpha\in\Z$, then equations \eqref{hardy-m-bigger-alpha},\eqref{hardy-m-zero},\eqref{hardy-m-negative} and \eqref{hardy-m-positive-b} are still valid, but we have to take care of the remaining case $m=\alpha$. 
We have 
\begin{equation}  \label{V-1-alfa}
\int_0^s V(t)^{-1}\, dt  =  \frac{s^{2\alpha}}{2\alpha} \,   \chi_{(0,\rho]}(s)  +\rho^{2\alpha} \Big( \frac{1}{2\alpha}  + \log (s/\rho)\Big)\,  \chi_{(\rho,\infty)}(s),
\end{equation}
As for the weight $W$, we have to include a logarithmic correction and set 
\begin{equation} \label{W-int}
W(t) = \frac{V(t)}{\rho^2+t^2 \big (1+\log^2 (t/\rho)\big)}.
\end{equation}
For $s\leq \rho$ we then get 
\begin{align*} 
\int_s^\infty W(t)\, dt\,  & \leq \int_s^\rho \frac{t^{1-2\alpha}}{\rho^2 +t^2 \big (1+\log^2 (t/\rho)\big)} \, dt \ + \rho^{-2\alpha} \int_\rho^\infty \frac{dt}{ t\, \big (1+\log^2 (t/\rho)\big)} \\[5pt]
& \leq\,  s^{-2\alpha} \int_s^\rho \frac{t}{\rho^2 +t^2 } \, dt +   \rho^{-2\alpha} \int_1^\infty \frac{dx}{ x\, \big (1+\log^2 x\big)}\\[5pt]
& = \frac{ s^{-2\alpha} }{2}\, \log \Big(\frac{2\rho^2}{\rho^2 +s^2}\Big) +\frac\pi 2\, \rho^{-2\alpha}.
\end{align*}
On the other hand, if $s>\rho$, then
\begin{align*} 
\int_s^\infty W(t)\, dt\,  & \leq  \rho^{-2\alpha} \int_s^\infty \frac{dt}{ t\, \big (1+\log^2 (t/\rho)\big)} =  \rho^{-2\alpha}  \Big(\frac \pi 2 -\arctan \big( \log (s/\rho)\big) \Big)\, .
\end{align*}
From  \eqref{V-1-alfa} and a short calculation we thus deduce that 
\begin{align*} 
 \sup_{s\leq\rho} \Big (\int_0^s V(t)^{-1}\, dt \Big)  \Big (\int_s^\infty W(t)\, dt \Big) \,  & \leq\  \frac{\pi +\log 2}{4\alpha}\, ,
 \end{align*}
while
\begin{align*} 
 \sup_{s>\rho} \Big (\int_0^s V(t)^{-1}\, dt \Big)  \Big (\int_s^\infty W(t)\, dt \Big) \,  & =  \sup_{s>\rho}  \Big( \frac{1}{2\alpha}  + \log (s/\rho)\Big) \Big(\frac \pi 2 -\arctan \big( \log (s/\rho)\big) \Big)\\[5pt]
&  \leq\  \frac{\pi}{4\alpha} + \sup_{s>\rho}   \log (s/\rho) \Big(\frac \pi 2 -\arctan \big( \log (s/\rho)\big) \Big)\\[5pt]
& = \frac{\pi}{4\alpha} +1\, .
 \end{align*}
Theorem  \ref{thm-class-2} then implies
\begin{equation*}
\int_0^\infty  g^2_\pp(\rho\, ;r)  \,  r^{1-2\alpha}\, |f'_m (r)|^2\, dr\,  \geq\, \frac{\alpha}{4\alpha+\pi}  \int_0^\infty \frac{ g^2_\pp(\rho\, ;r) }{\rho^2+r^2 \big(1+\log^2( r/\rho)\big)}\, |v_m(r)|^2\, r dr \qquad  m = \alpha. 
\end{equation*}
Since $\alpha \geq 1$, the above estimate combined with \eqref{hardy-m-bigger-alpha},\eqref{hardy-m-zero},\eqref{hardy-m-negative} and \eqref{hardy-m-positive-b}  gives 
\begin{align*}
\sum_{m\in\Z} \int_0^\infty   g^2_\pp(\rho\, ;r)\,   r^{1-2m}\, |f'_m(r)|^2\, dr\,  & \geq \frac{\alpha}{4\alpha+\pi}   \sum_{m\in \Z} \int_0^\infty \frac{  g^2_\pp(\rho\, ;r) }{\rho^2+r^2 (1+ \log^2(r/\rho))}\,   |v_m(r)|^2\, r dr.
\end{align*}
Inequality \eqref{main-2} thus follows in the same way as in the case of non-integer $\alpha$.
\end{proof}

We now state the Hardy inequalities associated to the form $Q_\mu[u]$.

\begin{theorem} \label{thm-hardy-2}
Let $B$ satisfy Assumption \ref{ass-B} and suppose that $\alpha \leq 0$. Then
the inequalities 
\begin{align}
Q_\m[u] \, &\geq   \beta_\m(B;\rho) \,  \mu^2_\alpha \int_{\R^2} \frac{|u|^2}{\rho^2+|x|^2}\, dx \ \qquad \qquad \qquad\qquad\qquad  \text{if} \   \alpha\not\in\Z.  \\[7pt]
Q_\m[u] \, & \geq \,    \beta_\m(B;\rho)\,  \frac{\alpha}{4\alpha+\pi}\,  \int_{\R^2} \frac{|u|^2\, dx}{\rho^2+|x|^2 \big(1+\log^2( |x|/\rho)\big)} \qquad\  \text{if} \  \alpha\in\Z 
\end{align}
hold for all $u\in H^1(\R^2)$ and all $\rho>0$.
\end{theorem}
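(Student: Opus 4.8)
The plan is to deduce the statement from Theorems~\ref{thm-hardy} and \ref{thm-hardy-integer} via the antiunitary symmetry of the Pauli operator that interchanges its two components. Replacing $B$ by $\tilde B:=-B$ reverses the normalized flux, $\tilde\alpha=-\alpha\geq 0$, and the generating potential, $A_{\tilde h}=-A_h$, since the function $\tilde h$ attached to $\tilde B$ through \eqref{h-eq} is $\tilde h=-h$. Combining this sign change with complex conjugation $u\mapsto\overline u$, and using the pointwise identity $(i\nabla-A_h)\overline u=-\overline{(i\nabla+A_h)u}$, hence $|(i\nabla-A_h)\overline u|^2=|(i\nabla+A_h)u|^2$, I would first record the exact identity
\begin{align*}
Q_\m[u] &= \int_{\R^2}\big(|(i\nabla+A_h)u|^2 - B|u|^2\big)\,dx \\
&= \int_{\R^2}\big(|(i\nabla+A_{\tilde h})\overline u|^2 + \tilde B|\overline u|^2\big)\,dx = \widetilde Q_\pp[\overline u],
\end{align*}
where $\widetilde Q_\pp$ is the $+$ form built from $\tilde B$. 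Since $|\tilde B|=|B|$ the field $\tilde B$ again satisfies Assumption~\ref{ass-B}, and $\overline u\in H^1(\R^2)$ whenever $u\in H^1(\R^2)$.

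The second step is to verify that the reflection $B\mapsto\tilde B$ turns every $-$ quantity into the corresponding $+$ quantity for $\tilde B$, so that the right-hand sides of Theorems~\ref{thm-hardy} and \ref{thm-hardy-integer} transcribe into the asserted weights. From \eqref{f-pm} and $\tilde\alpha=-\alpha$ one reads off $\tilde g_\pp(\rho;\cdot)=g_\m(\rho;\cdot)$; with $\tilde h=-h$ and the definitions \eqref{h-bounds} this yields $\tilde k_\pp(\rho)=k_\m(\rho)$ and $\tilde K_\pp(\rho)=K_\m(\rho)$, whence $\beta_\pp(\tilde B;\rho)=\beta_\m(B;\rho)$ by \eqref{beta}. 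Moreover $\mu_{\tilde\alpha}=\mu_{-\alpha}=\mu_\alpha$ from \eqref{mu}, while the integral weights in \eqref{main-1} and \eqref{main-2} depend only on $|x|$ and $\rho$ and are therefore unchanged; the same holds for the values of the weighted integrals under $u\mapsto\overline u$, since $|\overline u|=|u|$.

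With these identities the conclusion is immediate. If $\alpha\notin\Z$, so that $\tilde\alpha=-\alpha>0$ is non-integer, I would apply Theorem~\ref{thm-hardy} to $\widetilde Q_\pp[\overline u]$ and substitute $\beta_\pp(\tilde B;\rho)=\beta_\m(B;\rho)$ and $\mu_{\tilde\alpha}=\mu_\alpha$ to obtain the first inequality; if $\alpha\in\Z$, so that $\tilde\alpha\geq 0$ is an integer, I would instead invoke Theorem~\ref{thm-hardy-integer}, whose constant $\tfrac{\tilde\alpha}{4\tilde\alpha+\pi}$ and logarithmic weight become exactly those of \eqref{main-2}. I do not expect a genuine obstacle: the argument is a change of variables, and the only point demanding care --- and the natural source of a sign slip --- is the bookkeeping of the preceding paragraph, namely establishing simultaneously the conjugation identity $Q_\m[u]=\widetilde Q_\pp[\overline u]$ and the matchings $\tilde g_\pp=g_\m$, $\beta_\pp(\tilde B;\rho)=\beta_\m(B;\rho)$. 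As an alternative bypassing the symmetry, one could repeat the proof of Theorem~\ref{thm-hardy} verbatim after setting $w=e^{-h}u$, using the factorization $Q_\m[u]=\int_{\R^2}e^{2h}\,|(\partial_{x_1}+i\partial_{x_2})w|^2\,dx$ and expanding $w$ in a Fourier series; this merely reproduces the four one-dimensional estimates of that proof with $m$ replaced by $-m$, which is precisely why the reflection argument is preferable.
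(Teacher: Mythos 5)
Your proof is correct, and it takes a genuinely different route from the paper's. The paper proves Theorem \ref{thm-hardy-2} by re-running the argument of Theorem \ref{thm-hardy} on the mirrored factorization $Q_\m[e^h v]=\int_{\R^2}e^{2h}\,|(\partial_{x_1}+i\partial_{x_2})v|^2\,dx$, i.e.\ it repeats the Fourier decomposition and the one-dimensional weighted estimates with $m$ replaced by $-m$, omitting the details. You instead deduce the statement from Theorems \ref{thm-hardy} and \ref{thm-hardy-integer} used as black boxes, via the antiunitary spin-flip symmetry: the identity $Q_\m[u]=\widetilde Q_\pp[\overline u]$ for the reflected field $\tilde B=-B$ (which again satisfies Assumption \ref{ass-B} and has flux $\tilde\alpha=-\alpha\geq 0$), together with the bookkeeping $\tilde h=-h$ from \eqref{h-eq}, $\tilde g_\pp=g_\m$ from \eqref{f-pm}, hence $\tilde k_\pp=k_\m$ and $\tilde K_\pp=K_\m$ from \eqref{h-bounds}, so $\beta_\pp(\tilde B;\rho)=\beta_\m(B;\rho)$ by \eqref{beta}, and finally $\mu_{-\alpha}=\mu_\alpha$ by \eqref{mu}. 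All of these identities check out, as does the sign computation $(i\nabla-A_h)\overline u=-\overline{(i\nabla+A_h)u}$, so the reduction is complete. What your route buys is that no new analytic estimate is needed and the conjugation symmetry of the Pauli components is made explicit; the paper's (sketched) route is self-contained but duplicates the entire case analysis. One small remark: in the integer case your argument yields the constant $\tilde\alpha/(4\tilde\alpha+\pi)=|\alpha|/(4|\alpha|+\pi)$, whereas the theorem as printed reads $\alpha/(4\alpha+\pi)$ with $\alpha\leq 0$; the printed form must be a sign slip (for $\alpha=-1$ it would give $1/(4-\pi)>1$, which no bound obtained from the one-dimensional Hardy inequalities of Appendix \ref{sec-app1} can reach), and the constant you obtain is clearly the intended one.
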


The proof of Theorem \ref{thm-hardy-2} uses the fact that 
$$
Q_\m[e^h v] = \int_{\R^2} e^{2h} \, |(\partial_{x_1} + i \partial_{x_2}) v |^2\, dx , 
$$
and follows the same line of arguments as the proof of Theorem \ref{thm-hardy}. We omit the details.

Our next task is to state the final version of the hardy inequality for the Pauli operator \eqref{pauli-operator}. To this aim, we introduce the following notation. Let 
\begin{equation} \label{w-pm}
\w_\ppm(\rho\, ;x)=\beta_\ppm(\rho) \left\{  \begin{array}{ll}
\displaystyle  \frac{ \mu^2_\alpha}{\rho^2+|x|^2} &\ \mbox{if $ \alpha\not\in\Z $},\\[15pt]
\displaystyle \frac{\alpha}{4\alpha+\pi}\,  \frac{1}{\rho^2+|x|^2 \big(1+\log^2( |x|/\rho)\big)} &\ \mbox{if $ \alpha\in\Z  $}\, .
\end{array}
\right.
\end{equation}

We have 

\begin{corollary} \label{cor-pauli}
Let $B$ satisfy Assumption \ref{ass-B}  and let $A\in L^\infty(\R^2;\R^2)$ such that $\nabla\times A=B$ in the sense of distributions. Let 
\begin{equation} 
\mathcal W(\rho\, ; x)= \begin{pmatrix}
 \w_\pp(\rho\, ;x) & 0 \\
 0  &   0
\end{pmatrix}\, \quad \text{if} \   \ \alpha\geq 0, \qquad \ \
\mathcal  W(\rho\, ; x)= \begin{pmatrix}
0& 0 \\
 0  &    \w_\m(\rho\, ;x)
\end{pmatrix}\, \quad \text{if} \  \ \alpha \leq 0 .
\end{equation}
Then for any $\rho>0$ the Hardy type inequality
\begin{equation} \label{hardy-operator}
P(A) \geq \mathcal  W(\rho\, ;  \cdot)
\end{equation}
holds in the sense of quadratic forms on $H^1(\R^2;\C^2)$. 
\end{corollary}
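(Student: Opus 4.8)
The plan is to exploit the block-diagonal structure of $P(A)$ together with the nonnegativity of each Pauli component, thereby reducing the matrix inequality \eqref{hardy-operator} to the scalar Hardy inequalities already established in Theorems \ref{thm-hardy}, \ref{thm-hardy-integer} and \ref{thm-hardy-2}. Writing a generic element of the form domain as $\psi = (u_\pp, u_\m) \in H^1(\R^2;\C^2)$, the block structure of $P(A)$ in \eqref{pauli-operator} gives
\begin{equation*}
\langle \psi, P(A)\, \psi\rangle = Q_\pp[u_\pp] + Q_\m[u_\m],
\end{equation*}
while the matrix weight $\mathcal W$ contributes only one diagonal term, namely $\int_{\R^2} \w_\pp(\rho\, ;x)\, |u_\pp|^2\, dx$ when $\alpha \geq 0$ and $\int_{\R^2} \w_\m(\rho\, ;x)\, |u_\m|^2\, dx$ when $\alpha \leq 0$. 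Hence it suffices to control the relevant diagonal contribution and to simply discard the other, which is precisely where nonnegativity will enter.

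Next I would record that \emph{both} Pauli components are nonnegative. Indeed, the factorization \eqref{form-factor} shows $Q_\pp[u] = \int_{\R^2} e^{-2h}\, |(\partial_{x_1} - i\partial_{x_2}) (e^h u)|^2\, dx \geq 0$, and the analogous identity $Q_\m[e^h v] = \int_{\R^2} e^{2h}\, |(\partial_{x_1} + i\partial_{x_2}) v|^2\, dx$ quoted after Theorem \ref{thm-hardy-2} gives $Q_\m[u] \geq 0$ for every $u \in H^1(\R^2)$. This is the form-level expression of the well-known positivity of the Pauli operator, and it is exactly what permits one of the two diagonal entries of $\mathcal W$ to be set to zero.

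With these two ingredients in hand, the argument splits according to the sign of $\alpha$. For $\alpha \geq 0$ the Aharonov--Casher mechanism makes $H_\pp(A)$ the subcritical component: Theorem \ref{thm-hardy} (if $\alpha \not\in\Z$) or Theorem \ref{thm-hardy-integer} (if $\alpha\in\Z$) yields $Q_\pp[u_\pp] \geq \int_{\R^2} \w_\pp(\rho\, ;x)\, |u_\pp|^2\, dx$ directly from the definition \eqref{w-pm} of $\w_\pp$, while $Q_\m[u_\m]\geq 0$; summing the two gives $\langle\psi, P(A)\psi\rangle \geq \langle \psi, \mathcal W(\rho\, ;\cdot)\, \psi\rangle$, which is \eqref{hardy-operator}. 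For $\alpha \leq 0$ the roles are interchanged: Theorem \ref{thm-hardy-2} supplies $Q_\m[u_\m] \geq \int_{\R^2} \w_\m(\rho\, ;x)\, |u_\m|^2\, dx$ while $Q_\pp[u_\pp]\geq 0$, again yielding \eqref{hardy-operator}. Since the relevant theorems are stated for any $A \in L^\infty(\R^2;\R^2)$ with $\nabla\times A = B$, no further gauge adjustment is needed beyond invoking them.

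Because every step is a direct assembly of already-proved facts, I do not anticipate a genuine obstacle here; the only points requiring a little care are bookkeeping ones, namely matching the two prescriptions in \eqref{w-pm} to the two regimes $\alpha\in\Z$ and $\alpha\not\in\Z$, and verifying the degenerate case $\alpha = 0$, where both definitions of $\mathcal W$ collapse to the zero matrix and \eqref{hardy-operator} reduces to the bare nonnegativity $P(A)\geq 0$ established in the second paragraph.
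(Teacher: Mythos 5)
Your proposal is correct and follows essentially the same route as the paper, whose proof is simply the one-line observation that the claim follows from Theorems \ref{thm-hardy}, \ref{thm-hardy-integer} and \ref{thm-hardy-2}; you have merely made explicit the block-diagonal decomposition, the nonnegativity of the discarded component, and the degenerate case $\alpha=0$, all of which the paper leaves implicit.
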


\begin{proof}
The claim follows from Theorems \ref{thm-hardy} and \ref{thm-hardy-2}.
\end{proof}

\begin{remarks}
A few comments concerning Corollary \ref{cor-pauli} are in order. 
\begin{enumerate}
%\item Under Assumption \ref{ass-B} it is always possible to construct $A\in L^\infty(\R^2;\R^2)$ such that $\nabla\times A=B$ holds 
 %in the sense of distributions.
 
 \item For $\alpha=0$ we have 
$$
c\,  \leq \, e^{\mpp h(x)} \leq  C \qquad \forall\, x\in\R^2.
$$
for some $0<c<C$, see \eqref{h-bounds}. A standard test function argument then shows that if 
\begin{equation}
Q_\ppm [u] \, \geq \, \int_{\R^2} {\rm w}\,  |u|^2\, dx
\end{equation}
holds for some $\w\geq 0$ and all $u\in H^1(\R^2)$, then $\w=0$. Hence  no Hardy inequality
with a non-trivial integral weight can hold  in the case $\alpha=0$. This is reflected in Corollary \ref{cor-pauli} by the fact that the right hand side of  \eqref{hardy-operator}
vanishes when $\alpha=0$, see \eqref{w-pm}.
 
 \item It is not difficult to verify that under Assumption \ref{ass-B} the decay rate of the Hardy weight $\w_\ppm$ is optimal in the case $\alpha \not\in\Z$. 
 
 \item The optimality of the decay rate of $\w_\ppm$ in the case  $\alpha\in\Z$ is discussed in Proposition \ref{prop-L1} below. 
\end{enumerate}
\end{remarks}

The following result, which  is a generalization of \cite[Lem.~8.1]{kov}, shows that if $B$ satisfies Assumption \ref{ass-B} and if $\alpha\in\Z$, then the logarithmic factor on the right hand side 
of \eqref{main-2} cannot be removed.

\begin{proposition}\label{prop-L1}
Let $B$ satisfy Assumption \ref{ass-B} and assume that $0<\alpha\in\Z$. Suppose moreover that there exists ${\rm w}\geq 0$ such that 
\begin{equation} \label{hardy-L-1}
Q_\pp [u] \, \geq \, \int_{\R^2} {\rm w}\,  |u|^2\, dx
\end{equation}
holds all $u\in H^1(\R^2)$. Then ${\rm w} \in L^1(\R^2)$.
\end{proposition}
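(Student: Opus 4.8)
\section*{Proof proposal for Proposition \ref{prop-L1}}

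The plan is to show that the only possible obstruction to integrability of $\mathrm{w}$ lies at infinity, and then to rule it out by testing \eqref{hardy-L-1} against cut-offs of the threshold resonance of $Q_\pp$. First I would observe that $\mathrm{w}\in L^1_{\loc}(\R^2)$ comes for free: given a compact set $K$, choose $\phi\in C_0^\infty(\R^2)$ with $\phi\equiv 1$ on $K$, so that $\int_K \mathrm{w}\,dx\le \int_{\R^2}\mathrm{w}\,|\phi|^2\,dx\le Q_\pp[\phi]<\infty$, since $Q_\pp$ is bounded on $H^1(\R^2)$. Hence it suffices to establish $\int_{\{|x|\ge 2\}}\mathrm{w}\,dx<\infty$.

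Next I would introduce the threshold resonance. Writing $z=x_1+ix_2$ and recalling from \eqref{form-factor} that $Q_\pp[u]=\int_{\R^2} e^{-2h}\,|(\partial_{x_1} - i \partial_{x_2}) v |^2\, dx$ with $v=e^{h}u$, the function $v_*=\bar z^{-\alpha}$ is annihilated by $\partial_{x_1} - i \partial_{x_2}=2\partial_z$ away from the origin, so $u_*:=e^{-h}\,\bar z^{-\alpha}$ formally satisfies $Q_\pp[u_*]=0$. Applying \eqref{h-bounds} with $\rho=1$, for $|x|>1$ we have $g_\pp(1\,;|x|)=|x|^{\alpha}$ and therefore $k_\pp(1)\le e^{-h}|x|^{-\alpha}=|u_*|\le K_\pp(1)$. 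Thus $|u_*|$ is bounded above and bounded below by a positive constant $c_0:=k_\pp(1)>0$ on $\{|x|\ge 2\}$, while $u_*\notin L^2$ near infinity because $|u_*|\not\to 0$. This non-decaying, non-$L^2$ behaviour is precisely the borderline feature of the integer-flux case.

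Then I would cut off $u_*$ and estimate the energy. Fix a real $\chi_n\in C_0^\infty(\R^2)$, equal to a fixed ($n$-independent) profile on $\{1\le|x|\le 2\}$ interpolating from $0$ to $1$, with $\chi_n\equiv 0$ on $\{|x|\le 1\}$, $\chi_n\equiv 1$ on $\{2\le|x|\le n\}$, $\chi_n\equiv 0$ on $\{|x|\ge 2n\}$, and $|\nabla\chi_n|\le C/n$ on $\{n\le|x|\le 2n\}$. Set $u_n=\chi_n u_*\in H^1(\R^2)$. Since $v_n=e^{h}u_n=\chi_n \bar z^{-\alpha}$ and $(\partial_{x_1} - i \partial_{x_2})\bar z^{-\alpha}=0$ on $\supp\chi_n\subset\{|x|\ge 1\}$, the cross term drops and, using $|(\partial_{x_1} - i \partial_{x_2})\chi_n|^2=|\nabla\chi_n|^2$,
\begin{equation*}
Q_\pp[u_n]=\int_{\R^2} e^{-2h}\,|(\partial_{x_1} - i \partial_{x_2})\chi_n|^2\,|x|^{-2\alpha}\,dx=\int_{\R^2}|\nabla\chi_n|^2\,|u_*|^2\,dx .
\end{equation*}
The integrand is supported in $\{1\le|x|\le 2\}\cup\{n\le|x|\le 2n\}$. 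On the first annulus it contributes a fixed constant (here $|u_*|^2$ is bounded and $\chi_n$ is fixed); on the second, $|u_*|^2\le K_\pp(1)^2$ while $\int_{\{n\le|x|\le 2n\}}|\nabla\chi_n|^2\,dx\le (C/n)^2\cdot C' n^2=O(1)$. Hence $Q_\pp[u_n]\le C$ uniformly in $n$.

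Finally I would conclude: inserting $u_n$ into \eqref{hardy-L-1} and using $|u_n|=|u_*|\ge c_0$ on $\{2\le|x|\le n\}$,
\begin{equation*}
c_0^2\int_{\{2\le|x|\le n\}}\mathrm{w}\,dx\le \int_{\R^2}\mathrm{w}\,|u_n|^2\,dx\le Q_\pp[u_n]\le C ,
\end{equation*}
and letting $n\to\infty$ by monotone convergence gives $\int_{\{|x|\ge 2\}}\mathrm{w}\,dx\le C/c_0^2<\infty$. Together with $\mathrm{w}\in L^1_{\loc}$ this yields $\mathrm{w}\in L^1(\R^2)$. The main obstacle — and the only place where the integrality of $\alpha$ enters decisively — is the uniform energy bound: it works precisely because the threshold resonance $u_*$ has bounded, non-decaying modulus at infinity, so a linear cut-off over $\{n\le |x|\le 2n\}$ costs only $O(1)$ of energy. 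For non-integer flux no such bounded resonance exists (the critical mode $m=\alpha$ is absent), and the construction correctly breaks down, consistently with the fact that there $\mathrm{w}$ need not be integrable.
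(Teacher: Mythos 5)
Your proposal is correct, and it reaches the conclusion by a genuinely different route than the paper. The paper passes to the Poincar\'e gauge \eqref{poincare} and polar coordinates, and tests \eqref{hardy-L-1} against the functions \eqref{test-f}, whose modulus is a double logarithmic cut-off tending to $1$ and whose phase $\psi$ is assembled from the azimuthal potential $a(r,\theta)$ and the flux function $\Phi(r)$; the integrality of $\alpha$ is what makes that phase single-valued. The energy then reduces, cf.~\eqref{q-n}, to a radial Dirichlet integral plus a residual azimuthal term $r^{-2}(\alpha-\Phi(r))^2|u_n|^2$, controlled via the decay hypothesis \eqref{B-decay-cond}, plus the zero-order term $\int B|u_n|^2$, handled by $B\in L^1(\R^2)$ and dominated convergence. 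You instead stay in the gauge $A_h$ and exploit the factorization \eqref{form-factor}: your quasi-zero mode $u_*=e^{-h}\bar z^{-\alpha}$ --- single-valued again precisely because $\alpha\in\Z$ --- has modulus pinched between $k_\pp(1)$ and $K_\pp(1)$ outside the unit disc by \eqref{h-bounds} and \eqref{exp-h-asymp}, and the energy of $\chi_n u_*$ collapses exactly to $\int|\nabla\chi_n|^2\,|u_*|^2\,dx=O(1)$, with no residual magnetic or zero-order terms left to estimate. This buys a shorter and more structural argument (it is the supersymmetric counterpart of the paper's computation, with the phase and the gauge factor packaged into $\bar z^{-\alpha}e^{-h}$) together with the explicit bound $\int_{|x|\geq 2}{\rm w}\,dx\leq C/k_\pp(1)^2$; what the paper's route buys is test functions whose modulus tends to $1$ on all of $\R^2$, so that $\int_{\R^2}{\rm w}$ is bounded in one stroke without your separate local-integrability step, and a proof that displays how criticality is tied to the decay of $\alpha-\Phi(r)$. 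One point you should make explicit: the identity \eqref{form-factor} is derived in the paper for $u\in C_0^\infty(\R^2)$, whereas your $u_n$ are only compactly supported Lipschitz functions; since $|\nabla h|\in L^\infty$ by \eqref{A-L-infty}, both sides of \eqref{form-factor} are continuous in the $H^1$ topology and the identity extends by density, but this one-line remark is needed for the energy computation to be legitimate.
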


\begin{proof}
Without loss of generality we assume that the vector potential is given by the Poincar\'e gauge
\begin{equation} \label{poincare}
A_p(x) = (-x_2, x_1) \int_0^1 B(tx)\,  t\, dt \, .
\end{equation}
In polar coordinates we then have 
 \begin{equation}  \label{polar-coord}
Q_\pp [u]  =  \int_0^\infty\! \!\!\int_0^{2\pi} \Big[ \, |\partial_r u|^2 +r^{-2} |(i\partial_\theta + r a(r,\theta)) u|^2 \Big]\, r\, dr d\theta +  \int_0^\infty\!\! \int_0^{2\pi}  B |u|^2 \, r\, dr d\theta,
\end{equation}
where 
\begin{equation} \label{azi}
a(r,\theta) = \frac 1r \int_0^r B(t,\theta)\, t\, dt
\end{equation}
is the azimuthal component of the vector potential. Let
$$
\Phi(r) = \frac{1}{2\pi} \int_{|x|<r} B(x) \, dx = \frac{r}{2\pi} \int_0^{2\pi} a(r,\theta) d\theta
$$ 
be the normalized flux of $B$ through the disc of radius $r$ centered in the origin, and let 
$$
\psi(\theta,r)= \theta(\alpha-\Phi(r)) -r\int_0^\theta a(r,s) ds \,. 
$$
Clearly, $\psi(0, r)=\psi(2\pi, r)$ for all $r>0$. 
Now, consider the sequence of  $H^1(\R^2)$ test functions defined by
\begin{equation} \label{test-f}
u_n(r,\theta) = e^{ i\psi(\theta,r)} \min \big\{ \big(\log (r n)\big)_\pp ,\,  1,\, \big(\log (e\, n/r)\big)_\pp \big\}.
\end{equation}
A straightforward calculation shows that 
\begin{equation} \label{q-n}
Q_\pp [u_n] =  2\pi\, \int_0^\infty \Big( |u_n'|^2 + r^{-2} (\alpha-\Phi(r))^2 |u_n|^2\Big)\, r\, dr +  \int_0^\infty\!\! \int_0^{2\pi}  B\,  |u_n|^2 \, r\, dr d\theta.
\end{equation} 
It can be easily verified that 
$$
\sup_{n\in\N}\, \int_0^\infty  |u_n'|^2\, r\, dr  < \infty.
$$
Notice moreover that, as $n\to \infty $,  $|u_n| \to 1 $ locally uniformly in $\R^2$, and that, by Assumption \ref{ass-B},
$$
\big | \alpha-\Phi(r) \big | \leq  \frac{1}{2\pi}\int_r^\infty\!\! \int_0^{2\pi}  |  B(t,\theta) |  \, t\, dt d\theta \, \leq \, C \, r^{2-\tau}
$$
holds for some $C>0, \tau >2$, and all $r>0$. Since $B\in L^1(\R^2)$ we conclude from \eqref{hardy-L-1} and \eqref{q-n}, by the dominated convergence theorem, that 
\begin{equation} 
 \int_{\R^2} {\rm w}\,  dx = \lim_{n\to \infty} \int_{\R^2} {\rm w}\,  |u_n|^2\, dx \, \leq \, \limsup_{n\to \infty} \, Q_\pp [u_n] < \infty,
\end{equation}
as claimed.
\end{proof}

%%%%%%%%%%%%%%%%%%%%%%%%%%%%%%%%%%%%%%%%%%

\subsection{\bf Radial magnetic fields with compact support} 
\label{ssec-radial}

We devote this section to study the case of radially symmetric and compactly supported magnetic field, in which we are able to compute explicitly the constant $\beta_+(R)$ in inequality \eqref{main-1}, as we show in the following result.
\begin{proposition} \label{prop-radial}
Let $b\in L^p_{\rm loc}(\R_+; rdr), p>2,$ be a non-negative function supported in the interval $(0, R),\, R>0$. Let
 $B(x) = b(|x|)$ and suppose that $\int_{\R^2} B >0$. Then
 \begin{align}\label{hardy-radial} 
Q_\pp[u] \, &\geq \,  \mu^2_\alpha\, \frac{e^{2\lambda(R)}}{R^{2\alpha}}  \int_{\R^2} \frac{|u|^2}{R^2+|x|^2}\, dx  \qquad \forall\, u\in H^1(\R^2), 
\end{align}
where
\begin{equation} \label{omega}
\lambda(R)= \int_0^R b(t)\, t \log t\, dt.
\end{equation} 
\end{proposition}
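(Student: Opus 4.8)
The plan is to specialize the general bound of Theorem \ref{thm-hardy} to the radial, compactly supported case and to show that for the natural choice $\rho = R$ the abstract constant $\beta_\pp(B;R)$ can be evaluated in closed form, yielding precisely $e^{2\lambda(R)}/R^{2\alpha}$. Since Theorem \ref{thm-hardy} gives, for every $\rho>0$,
\[
Q_\pp[u] \ \geq\ \mu_\alpha^2\, \beta_\pp(B;\rho)\int_{\R^2}\frac{|u|^2}{\rho^2+|x|^2}\,dx,
\]
with $\beta_\pp(B;\rho)=\big(k_\pp(\rho)/K_\pp(\rho)\big)^2$, it suffices to take $\rho=R$ and to prove the identity
\[
\frac{k_\pp(R)}{K_\pp(R)}\ =\ \frac{e^{\lambda(R)}}{R^{\alpha}}.
\]
Here $k_\pp(R)$ and $K_\pp(R)$ are the infimum and supremum over $x\in\R^2$ of $e^{-h(x)}/g_\pp(R;|x|)$, where $h$ is defined by \eqref{h-eq} and $g_\pp$ by \eqref{f-pm}.

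First I would compute $h$ explicitly. Because $B(x)=b(|x|)$ is radial, $h$ is radial as well, $h(x)=h(|x|)$, and Newton's theorem (the mean-value property of the logarithmic potential in two dimensions) lets me evaluate the convolution \eqref{h-eq} by replacing $\log|x-y|$ with its angular average: for fixed $r=|x|$ one has $\tfrac{1}{2\pi}\int_0^{2\pi}\log|x-y|\,d\theta_y=\log\max\{r,s\}$ where $s=|y|$. This reduces $h$ to a one-dimensional integral over the radial density $b$. Carrying out the computation on the two regimes $r\le R$ and $r>R$, and using that $b$ is supported in $(0,R)$ together with the normalization $\alpha=\frac{1}{2\pi}\int_{\R^2}B=\int_0^R b(t)\,t\,dt$, I expect to find that for $r\ge R$ the potential takes the clean form $h(r)=-\alpha\log r+\lambda(R)$ (so that $e^{-h(r)}=r^{\alpha}e^{-\lambda(R)}$ exactly, not merely asymptotically as in \eqref{exp-h-asymp}), while for $r\le R$ one gets $h(r)=-\log r\int_0^r b\,t\,dt-\int_r^R b(t)t\log t\,dt$ or an equivalent expression.

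The heart of the argument is then to analyze the ratio $e^{-h(r)}/g_\pp(R;r)$ as a function of $r$ and to show it is monotone, so that its extremal values are attained in the limits $r\to 0$ and $r\to\infty$ (or at the matching point $r=R$). On the outer region $r>R$, where $g_\pp(R;r)=R^{-\alpha}r^{\alpha}$, the explicit formula for $h$ gives $e^{-h(r)}/g_\pp(R;r)=R^{\alpha}e^{-\lambda(R)}$ identically, a constant in $r$. On the inner region $r\le R$, where $g_\pp\equiv 1$, I would differentiate $\log\!\big(e^{-h(r)}\big)=-h(r)$ and use $h'(r)=-\tfrac{1}{r}\int_0^r b(t)t\,dt\le 0$ (since $b\ge 0$) to see that $e^{-h(r)}$ is nondecreasing in $r$; hence on $(0,R]$ the ratio ranges monotonically up to its value $R^{\alpha}e^{-\lambda(R)}$ at $r=R$, matching continuously the constant value on the outer region. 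Consequently $K_\pp(R)=R^{\alpha}e^{-\lambda(R)}$ equals the common limiting/boundary value, while $k_\pp(R)=e^{-h(0^+)}$ is the infimum attained as $r\to 0$.

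The main obstacle I anticipate is evaluating $k_\pp(R)=\inf_r e^{-h(r)}/g_\pp(R;r)$ and checking that the resulting quotient $k_\pp(R)/K_\pp(R)$ collapses to exactly $e^{\lambda(R)}/R^{\alpha}$ rather than some messier expression; this requires the inner-region estimate to be tight precisely at the endpoints and demands care with the value of $h$ near the origin (where $e^{-h}$ may tend to a finite constant or to $1$ depending on the behavior of $b$). The cleanest route is to verify directly that the function $r\mapsto e^{-h(r)}/g_\pp(R;r)$ is monotone nondecreasing on all of $(0,\infty)$ and constant on $[R,\infty)$, so that $K_\pp(R)$ equals its value on $[R,\infty)$ and $k_\pp(R)$ its infimum at $r\to0^+$; the ratio $k_\pp(R)/K_\pp(R)$ then telescopes, and squaring gives $\beta_\pp(B;R)=e^{2\lambda(R)}/R^{2\alpha}$. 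Substituting $\rho=R$ into \eqref{main-1} yields \eqref{hardy-radial}, and density of $C_0^\infty(\R^2)$ in $H^1(\R^2)$ extends it to all admissible $u$.
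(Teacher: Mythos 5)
Your strategy is the same as the paper's: take $\rho=R$ in Theorem \ref{thm-hardy} and evaluate $\beta_\pp(B;R)=\big(k_\pp(R)/K_\pp(R)\big)^2$ by computing $h$ via Newton's theorem and exploiting the monotonicity of $e^{-h}$ (your observation $h'(r)=-\tfrac1r\int_0^r b(t)\,t\,dt\le 0$ is exactly the paper's argument). However, your computation of $h$ contains an error that breaks the identification of the two constants. Newton's theorem gives
\begin{equation*}
h(r)\;=\;-\log r\int_0^r b(t)\,t\,dt\;-\;\int_r^\infty b(t)\,t\,\log t\,dt ,
\end{equation*}
and since $b$ is supported in $(0,R)$, for $r\ge R$ this collapses to $h(r)=-\alpha\log r$, i.e.\ $e^{-h(r)}=r^\alpha$ with \emph{no} factor $e^{-\lambda(R)}$. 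Your claimed exterior formula $h(r)=-\alpha\log r+\lambda(R)$ is inconsistent with your own interior formula, which at $r=R$ gives $h(R)=-\alpha\log R$; the "continuous matching" you invoke therefore fails. The correct conclusion is that $e^{-h(r)}/g_\pp(R;r)\equiv R^\alpha$ on $[R,\infty)$, hence $K_\pp(R)=R^\alpha$, not $R^\alpha e^{-\lambda(R)}$.

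The constant $\lambda(R)$ actually enters at the opposite end, through precisely the limit you flagged as uncertain and left unevaluated: as $r\to0$ one has $\log r\int_0^r b(t)\,t\,dt\to0$ (by H\"older, using $b\in L^p_{\rm loc}$ with $p>2$), while $\int_r^\infty b(t)\,t\log t\,dt\to\lambda(R)$, so that $h(0^+)=-\lambda(R)$ and, by monotonicity, $k_\pp(R)=e^{-h(0^+)}=e^{\lambda(R)}$. Only with \emph{both} correct values does the quotient give the claim: $\beta_\pp(B;R)=\big(e^{\lambda(R)}/R^\alpha\big)^2=e^{2\lambda(R)}/R^{2\alpha}$. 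If one instead combines the correct $k_\pp(R)=e^{\lambda(R)}$ with your $K_\pp(R)=R^\alpha e^{-\lambda(R)}$, the result is $e^{4\lambda(R)}/R^{2\alpha}$, which is not the stated constant and, for $b$ concentrated near $t=R$ with $R>1$, even exceeds $1$, contradicting $\beta_\pp\in(0,1)$. So the proposal has the right architecture but a genuine gap: the exterior formula for $h$ must be corrected and the limit $h(0^+)=-\lambda(R)$ must be computed; these two short computations are exactly the content of the paper's proof.
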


\begin{proof}
Since $B$ is radial, $h$ is radial as well, cf.~\eqref{h-eq}. From the Newton's theorem, cf.~the proof of \cite[Thm.~9.7]{LL}, we thus obtain 
\begin{equation} \label{h-radial}
h(r) = -\log r \!\int_0^r b(t)\, t\, dt -\int_r^\infty b(t)\, t\, \log t\, dt.
\end{equation}
So,
$$
e^{-h(r)} = r^{\alpha}   \qquad  \forall\,   r \geq  R.
$$
Moreover, since $b\geq 0$, it follows that $h$ is decreasing. Hence $e^{-h}$ is increasing and therefore $e^{-h(0)} = e^{\lambda(R)} < R^\alpha=e^{-h(R)}$. 
Upon setting $\rho=R$ we then get, cf.~\eqref{h-bounds},
\begin{equation} 
k_\pp(R) = e^{\lambda(R)}  \qquad \text{and} \qquad K_\pp(R) =  R^\alpha\, ,
\end{equation}
and consequently,
 \begin{equation} 
\beta_\pp(R) = \frac{e^{2\lambda(R)}}{R^{2\alpha}}\, .
\end{equation} 
The claim now follows from  inequality \eqref{main-1}.
\end{proof}

\subsubsection*{\bf Example} 
Consider the sequence of radial magnetic fields given by
\begin{equation} \label{example-B}
B_n(x) = b_n(|x|), \qquad 
b_n(r) = \alpha (n+2)\ r^n\, \chi_{(0,1)}(r)\, ,
\end{equation} 
where $0 <\alpha \leq 1/2,$ and where  $\chi_M$ denotes the characteristic function of $M\subseteq \R$.  Then $R=1$, and the total flux is independent of $n$; 
$$
\frac{1}{2\pi} \int_{\R^2} B_n(x)\, dx = \alpha (n+2) \int_0^1 r^{n+1}\, dr= \alpha \qquad \forall \, n\in\N.
$$
On the other hand, 
$$
\lambda_n(1) =  \int_0^1 b_n(t)\, t \log t\, dt = -\frac{\alpha}{n+2}\, .
$$
Since $e^{-x} \geq 1-x$ for all $x\geq0$, this implies that 
 \begin{equation} 
\beta_{\pp}(1)= e^{2\lambda_n(1)} \geq 1 -\frac{2\alpha}{n}\, .
\end{equation} 
Now if we denote
$$
Q_n[u]  :=  \int _{\R^2} \big (\, |(i\nabla +A_n) u|^2 + B_n |u|^2 \big )\, dx, 
$$
with
\begin{equation} \label{A_n} 
A_n= (\partial_{x_2} h_n, - \partial_{x_1} h_n) \ ,\qquad h_n(x) =  \frac{1}{2\pi} \int_{\R^2} B_n(y) \log |x-y|\, dy,
\end{equation}
then equation \eqref{hardy-radial} gives 
 \begin{align}  \label{Qn-lowerb}
Q_n[u] \, &\geq \,  \alpha^2 \Big(1 -\frac{2\alpha}{n}\Big) \, \int_{\R^2} \frac{|u|^2}{1+|x|^2}\, dx  \qquad \forall\, u\in H^1(\R^2), \qquad \forall\,  n\geq 1\, .
\end{align}

\begin{proposition} \label{prop-sharp}
The constant $\alpha^2$ on the right hand side of \eqref{Qn-lowerb} is sharp.
\end{proposition}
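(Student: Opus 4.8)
The plan is to complement the lower bound \eqref{Qn-lowerb} by producing, for each fixed $n$, a sequence of admissible trial functions whose Rayleigh quotient converges to $\alpha^2$. Since $0<\alpha\le 1/2$ we have $\mu_\alpha=\alpha$, so this shows that the prefactor $\alpha^2$ in \eqref{Qn-lowerb} cannot be replaced by any larger constant; together with \eqref{Qn-lowerb} it pins down the best constant
$$
\kappa_n:=\inf_{0\ne u\in H^1(\R^2)}\frac{Q_n[u]}{\int_{\R^2}|u|^2(1+|x|^2)^{-1}\,dx}
$$
to within $O(1/n)$, with $\kappa_n\to\alpha^2$. The key structural observation is that the trial functions may be taken radial, so that only the $m=0$ mode enters. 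Writing $v=e^{h_n}u$ as in \eqref{form-factor}, a radial $u$ gives a radial $v=v_0$ and
\begin{equation*}
Q_n[u]=2\pi\int_0^\infty e^{-2h_n(r)}\,|v_0'(r)|^2\,r\,dr,\qquad \int_{\R^2}\frac{|u|^2}{1+|x|^2}\,dx=2\pi\int_0^\infty\frac{e^{-2h_n(r)}}{1+r^2}\,|v_0(r)|^2\,r\,dr .
\end{equation*}

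By \eqref{h-radial} we have $e^{-2h_n(r)}=r^{2\alpha}$ for all $r\ge R=1$, \emph{independently of $n$}. I would therefore fix a Lipschitz, compactly supported cut-off $\psi_k$ with $\psi_k\equiv 1$ on $[0,k]$, $\psi_k(r)=2-\log r/\log k$ on $[k,k^2]$ and $\psi_k\equiv 0$ on $[k^2,\infty)$, and take the radial trial function $u_k(x)=\psi_k(|x|)\in H^1(\R^2)$; equivalently $v_0(r)=e^{h_n(r)}\psi_k(r)$, which on $r\ge 1$ equals $r^{-\alpha}\psi_k(r)$. Because $u_k\equiv 1$ on $\{r<1\}$ there, the contribution of $\{r<1\}$ to $Q_n[u_k]$ equals $2\pi\int_0^1|h_n'(r)|^2\,r\,dr$, which is bounded uniformly in $k$ and in $n$; likewise its contribution to the denominator is $O(1)$.

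On $\{r\ge 1\}$ I would expand $v_0=r^{-\alpha}\psi_k$ and integrate by parts, obtaining
\begin{equation*}
\int_1^\infty r^{2\alpha+1}\,|v_0'(r)|^2\,dr=\alpha^2\int_1^\infty\frac{\psi_k^2}{r}\,dr+\alpha\,\psi_k(1)^2+\int_1^\infty r\,|\psi_k'(r)|^2\,dr .
\end{equation*}
The cut-off is designed so that $\int_1^\infty r\,|\psi_k'|^2\,dr=(\log k)^{-1}\to 0$ and $\psi_k(1)=1$, while $\int_1^\infty \psi_k^2/r\,dr\ge\log k\to\infty$. Using $\tfrac{r}{1+r^2}=\tfrac1r-\tfrac{1}{r(1+r^2)}$, the $\{r\ge1\}$ part of the denominator equals $\int_1^\infty\psi_k^2/r\,dr$ up to an $O(1)$ error. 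Hence both numerator and denominator grow like $\int_1^\infty\psi_k^2/r\,dr\sim\log k$, the numerator carrying the extra factor $\alpha^2$, so $Q_n[u_k]\big/\int_{\R^2}|u_k|^2(1+|x|^2)^{-1}dx\to\alpha^2$ as $k\to\infty$. This yields $\kappa_n\le\alpha^2$ for every $n$, and combined with \eqref{Qn-lowerb} proves that $\alpha^2$ is sharp.

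The only genuinely delicate point is the bookkeeping of the lower-order terms: one must verify that the $\{r<1\}$ contributions, the boundary term $\alpha\,\psi_k(1)^2$, the cross term, and the cut-off derivative term are all $O(1)$ or $o(\log k)$, so that they are killed in the ratio by the diverging main term $\int_1^\infty\psi_k^2/r\,dr$. The logarithmic cut-off is precisely what forces $\int_1^\infty r\,|\psi_k'|^2\,dr\to 0$ (a polynomial cut-off would leave an $O(1)$ remainder there), which is the one place where the choice of scale is essential.
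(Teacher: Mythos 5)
Your proof is correct, but it is organized differently from the paper's, and the difference is worth noting. Both arguments rest on the same mechanism: outside the support of $B_n$ the operator behaves like an Aharonov--Bohm operator of flux $\alpha$, and logarithmic ramps spread over annuli $[k,k^2]$ make the kinetic (cut-off) term $O(1/\log k)$ while the effective azimuthal term contributes $\alpha^2\int \psi_k^2\, r^{-1}dr \sim \alpha^2\log k$, matching the growth of the weighted denominator. The paper, however, runs a \emph{diagonal} argument: it uses the Poincar\'e gauge \eqref{poincare} in polar coordinates and a single test function $u_n$ per field $B_n$, supported in $\{r\ge n\}$, i.e.\ \emph{disjoint} from ${\rm supp}\, b_n$, so the field term vanishes identically and $a(r)=\alpha/r$ on the support; this yields $Q_n[u_n]/\int |u_n|^2(1+|x|^2)^{-1}dx \to \alpha^2$ as $n\to\infty$, i.e.\ sharpness in the asymptotic sense $\limsup_n \kappa_n \le \alpha^2$. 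You instead fix $n$, decouple the trial-function parameter $k$ from the field index, and let your $u_k$ equal $1$ across the support of $B_n$; the price is that you must control the core region $\{r<1\}$, which you do through the factorization \eqref{form-factor} (the contribution $2\pi\int_0^1 |h_n'|^2 r\,dr$ is indeed $O(1)$: by \eqref{h-radial} and \eqref{example-B} one has $h_n'(r)=-\alpha r^{n+1}$ on $(0,1)$), plus an integration by parts on $\{r\ge1\}$ whose boundary term $\alpha\psi_k(1)^2$ is harmless. What your version buys is a slightly stronger conclusion: the best constant is at most $\alpha^2$ for \emph{every fixed} $n$, so $\kappa_n\in[\alpha^2(1-2\alpha/n),\,\alpha^2]$, rather than only in the limit $n\to\infty$; what the paper's version buys is brevity, since choosing the supports disjoint eliminates all core bookkeeping and reduces the computation to two explicit one-dimensional integrals. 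All of your intermediate identities (the expansion of $r^{2\alpha+1}|v_0'|^2$, the value $1/\log k$ of the ramp term, the $O(1)$ comparison $r/(1+r^2)=1/r+O(r^{-3})$) check out.
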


\begin{proof}
Since the magnetic field is radial, the gauge \eqref{A-h-def} coincides with the Poincar\'e  gauge \eqref{poincare}.
Hence passing to the polar coordinates we get
$$
Q_n[u] =  \int_0^\infty\! \!\!\int_0^{2\pi} \Big[ \, |\partial_r u|^2 +r^{-2}\, \big | (i\partial_\theta +r a(r)) u\big|^2 \Big]\, r\, dr d\theta +  \int_0^\infty\!\! \int_0^{2\pi}  B_n |u|^2 \, r\, dr d\theta,
$$
where
$$
a(r) = \frac 1 r\int_0^r b(t)\, t \, dt,
$$
cf.~\eqref{azi}.
Now consider the sequence of radial test functions $u_n(x) = u_n(|x|)$ given by
\begin{equation} 
u_n(r) =  \left\{  \begin{array}{ll}
\big(\log (r/n)\big)_\pp  & \text{if} \quad  \  0< r \leq  e n \\[5pt]
1 &   \text{if} \quad  e n \leq  r  \leq  n^2 \\[5pt]
\big(\log (e n^2/r)\big)_\pp   & \text{if} \quad  n^2 \leq  r  
\end{array}
\right.
\end{equation}
Then $u_n\in H^1(\R^2)$ for all $n\geq 1,$ and since the supports of $u_n$ and $b_n$ are disjoint, it follows that 
$$
Q_n[u_n]  = 2\pi \int_0^\infty \Big( \, |\partial_r u_n|^2 +\frac{\alpha^2}{r^2}\,  |u_n|^2 \Big) \, r\, dr \, .
$$
An explicit calculation gives
$$
Q_n[u_n]  = 2\pi \alpha^2 \log n  + \mathcal{O}(1), \qquad \int_{\R^2} \frac{|u_n|^2}{1+|x|^2}\, dx   = 2\pi \log n + \mathcal{O}(1)\, .
$$
Therefore
$$
\lim_{n\to \infty} \frac{Q_n[u_n]}{\displaystyle  \int_{\R^2} \frac{|u_n|^2}{1+|x|^2}\, dx  } = \alpha^2 \, ,
$$ 
which implies the claim.
\end{proof}

%%%%%%%%%%%%%%%%%%%%%%%%%%%%%%%%%%%%%%%%%%%%%%%%%%%%%%%%%%%%%%%%%%%%%%%%%%%%%%%%%%%
\subsection{\bf Hardy inequality for magnetic Schr\"odinger operators}

We now export the above arguments for the Pauli operator to the case of the magnetic Laplacian, obtaining a quantitative Hardy inequality, at the cost 
of multiplying the integral weight by $1/2$. We have the following result.

\begin{corollary} \label{cor-schr}
Let $B$ satisfy Assumption \ref{ass-B}, and  let $A:\R^2\to \R^2$ be  a vector potential associated to $B$. If $\alpha\neq 0$, then
the inequality 
\begin{align*}
 \int_{\R^2} |(i\nabla +A)u|^2 \, dx \, &\geq \, \frac 12 \int_{\R^2} \w_\sigma(\rho\, ; x) |u|^2 \, dx 
\end{align*}
holds for all $u\in H^1(\R^2)$ and all $\rho>0$. Here $\sigma = {\rm sign}(\alpha)$, and $\w_\ppm(\rho)$ is given by \eqref{w-pm}. 
\end{corollary}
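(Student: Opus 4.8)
The plan is to read off the inequality directly from the two Hardy bounds already proved for the diagonal blocks of the Pauli operator, using the elementary observation that the magnetic Dirichlet energy is the arithmetic mean of the two Pauli quadratic forms. First I would reduce to the distinguished gauge $A=A_h$ by the same gauge-invariance argument used at the start of the proof of Theorem~\ref{thm-hardy}: on the simply connected plane any two $L^\infty$ potentials with curl $B$ differ by a gradient, and since the weight $\w_\sigma(\rho\,;\cdot)$ is real and depends only on $|x|$, the factor $e^{i\chi}$ implementing the change of gauge leaves both sides of the claimed inequality unchanged.

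With $A=A_h$, the definitions of $Q_\pp$ and $Q_\m$ give, upon cancelling the terms $\pm B|u|^2$, the averaging identity
\begin{equation*}
\int_{\R^2}|(i\nabla+A_h)u|^2\,dx \;=\; \tfrac12\big(Q_\pp[u]+Q_\m[u]\big),\qquad u\in H^1(\R^2),
\end{equation*}
which is meaningful because $d(Q_\ppm)=H^1(\R^2)$, so in particular $\int_{\R^2}B|u|^2\,dx$ is finite. The structural input I would then exploit is the non-negativity of \emph{each} block: $Q_\pp[u]\ge 0$ is immediate from the factorisation \eqref{form-factor}, while $Q_\m[u]\ge0$ follows from the companion identity $Q_\m[e^h v]=\int_{\R^2}e^{2h}|(\partial_{x_1}+i\partial_{x_2})v|^2\,dx$ recorded after Theorem~\ref{thm-hardy-2}.

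It then remains to retain the block dictated by the sign of the flux, which is exactly the Aharonov--Casher dichotomy encoded in $\sigma=\sgn\alpha$ and in the definition \eqref{w-pm} of $\w_\sigma$. For $\alpha>0$ I would discard the non-negative summand $Q_\m[u]$ and bound $Q_\pp[u]$ from below by Theorem~\ref{thm-hardy} if $\alpha\notin\Z$, or by Theorem~\ref{thm-hardy-integer} if $\alpha\in\Z$; in both cases the right-hand side is precisely $\int_{\R^2}\w_\pp(\rho\,;x)|u|^2\,dx$. For $\alpha<0$ I would symmetrically discard $Q_\pp[u]$ and apply Theorem~\ref{thm-hardy-2} to $Q_\m[u]$. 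Combining this with the averaging identity produces the asserted bound with weight $\w_\sigma$ and the constant $\tfrac12$.

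I do not anticipate a genuine analytic difficulty: the whole content is that the magnetic Laplacian lies halfway between the two spin components, so that throwing away the ``wrong-sign'' component---legitimate by its positivity---costs exactly the factor $1/2$. The only steps requiring mild care are the gauge reduction (so that the forms $Q_\ppm$, defined in the $A_h$ gauge, may be invoked verbatim) and the matching of $\sigma$ with the correct theorem, both of which are routine.
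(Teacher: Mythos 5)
Your proof is correct and is essentially the paper's own argument: the paper likewise combines the Hardy bound for the spin component selected by $\sigma=\mathrm{sign}(\alpha)$ (via Corollary \ref{cor-pauli}) with the non-negativity of the opposite Pauli component. The only cosmetic difference is that the paper sums two lower bounds on $\int_{\R^2}|(i\nabla+A)u|^2\,dx$ (one carrying $-\sigma\int B|u|^2$, one carrying $+\sigma\int B|u|^2$), whereas you write the magnetic energy as the mean $\tfrac12\big(Q_\pp[u]+Q_\m[u]\big)$ and discard the non-negative wrong-sign form --- an algebraically identical rearrangement.
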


\begin{proof}
Let $u\in H^1(\R^2)$.
By Corollary \ref{cor-pauli}, 
\begin{align*}
 \int_{\R^2} |(i\nabla +A)u|^2 \, dx \, &\geq  \int_{\R^2} \w_\sigma(\rho\, ;x) |u|^2 \, dx  \mp {\rm sign}(\alpha)  \int_{\R^2} B\,  |u|^2 \, dx .
\end{align*}
On the other hand, the positivity of the Pauli operator implies that 
\begin{align*}
 \int_{\R^2} |(i\nabla +A)u|^2 \, dx \, &\geq\,   \int_{\R^2} B\,  |u|^2 \, dx  \qquad \text{and} \qquad 
 \int_{\R^2} |(i\nabla +A)u|^2 \, dx \, \geq\,   -\! \int_{\R^2} B\,  |u|^2 \, dx .
\end{align*}
The claim thus follows by summing up the above bounds. 
\end{proof}

%%%%%%%%%%%%%%%%%%%%%%%%%%%%%%%%%%%%%%%%%%%%%%%%%%%%%%%%%%%%%%%%%%%%%%%%%%
%%%%%%%%%%%%%%%%%%%%%%%%%%%%%%%%%%%%%%%%%%%%%%%%%%%%%%%%%%%%%%%%%%%%%%%%%%
\section{\bf Aharonov-Bohm magnetic field}
\label{sec-ab}

In this section we will prove a Hardy inequality for the Pauli operator with an Aharonov-Bohm magnetic field of total (normalized) flux $\alpha$ concentrated in the origin. 
Such a magnetic field is generated by the vector potential 
\begin{equation} 
A_\alpha(x) = \frac{\alpha}{|x|^2}\, (-x_2, x_1).
\end{equation}
Consequently, we have
\begin{equation} \label{h-ab}
h(x) = -\alpha\log |x|,
\end{equation}
where $h$ is defined in \eqref{A-h-def}.
As usual when dealing with the Aharonov-Bohm magnetic field, it is sufficient, by unitary equivalence, to study the case 
\begin{equation} \label{flux-restriction}
\alpha\in (-1,1).
\end{equation}
Since $A_\alpha\not\in L^2_{\rm loc}(\R^2)$, the associated quadratic forms  $Q^\alpha_\ppm$, associated to the spin-up and spin-down component of the Pauli operator $P(A_\alpha)$, 
cannot be defined in the same way as in the case of a regular magnetic field, cf.~Section \ref{sec-regular}. It is well-know that the operator $P(A_\alpha)$, defined on $C_0^\infty(\R^2\setminus\{0\})$,
is not essentially self-adjoint. Various  self-adjoint extensions are considered in \cite{ar,ev,gs,per}.

Here we consider the Friedrichs' extension. Since we are interested in the validity of a Hardy-type inequality, we only have to define the quadratic form associated to the latter, 
namely
\begin{align}
Q^\alpha_\pp[u] & :=  \int_{\R^2}  |x|^{2\alpha} \big|(\partial_{x_1} - i \partial_{x_2}) \big(|x|^{-\alpha} u\big)  \big |^2\, dx 	\\[4pt]
 Q^\alpha_\m[u] &:= \int_{\R^2}  |x|^{-2\alpha} \big|(\partial_{x_1} + i \partial_{x_2}) \big(|x|^{\alpha} u\big)  \big |^2\, dx 
\end{align}
where the form domains are given by  
\begin{equation} \label{u-zero}
 d(Q^\alpha_\pp)  =d(Q^\alpha_\m) =  \big\{ u\in H^1(\R^2): \  A_\alpha  u \in L^2(\R^2) \big\},
\end{equation} 
 see \cite{bcf} for the details.
%with the corresponding form domains
%\begin{equation} \label{f-domains}
%\begin{aligned}
%d(Q^\alpha_\pp) & = \big\{ u\in L^2(\R^2) :\  |x|^{\alpha} (\partial_{x_1} - i \partial_{x_2}) \big(|x|^{-\alpha} u\big) \in   L^2(\R^2) \big\} , 
% \\[5pt]
 %d(Q^\alpha_\m) & = \big\{ u\in L^2(\R^2) :\  |x|^{-\alpha} (\partial_{x_1} + i \partial_{x_2}) \big(|x|^{\alpha} u\big) \in   L^2(\R^2) \big\}.
%\end{aligned}
%\end{equation}
%A short calculation then shows that 
%\begin{align}
%d(Q^\alpha_\pp) & = d(Q^\alpha_\m) = H^1_{ab}(\R^2) :=  \Big\{ u\in H^1(\R^2) :\  \frac{u}{|x|} \in   L^2(\R^2) \Big\} , 
%\end{align}

We have the following result.

\begin{theorem} \label{thm-hardy-ab}
For any  $\alpha\in (-1,1)$ and all $u\in  d(Q^\alpha_\ppm)$ it holds 
\begin{align} \label{hardy-ab}
Q^\alpha_\ppm[u] \, &\geq  \,  \mu^2_\alpha \!\int_{\R^2} \frac{|u|^2}{|x|^2}\ dx .
\end{align}
Recall that $\mu_\alpha$ is given by \eqref{mu}. Moreover, the constant $ \mu^2_\alpha$ is sharp.
\end{theorem}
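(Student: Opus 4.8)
The plan is to mirror the mode-by-mode strategy of Theorem \ref{thm-hardy}, exploiting that for the Aharonov--Bohm potential the gauge factor $e^{-h}=|x|^{\alpha}$ is an exact power, so that it cancels perfectly and the constant $(m-\alpha)^2$ emerges cleanly in every angular channel. First I would pass to polar coordinates and expand $u=\sum_{m\in\Z}e^{im\theta}u_m(r)$ in its angular Fourier series. Writing $v=|x|^{-\alpha}u$ and using $(\partial_{x_1}-i\partial_{x_2})=e^{-i\theta}(\partial_r-\tfrac{i}{r}\partial_\theta)$ exactly as in \eqref{eq-timo}, the prefactor $|x|^{2\alpha}$ in $Q^\alpha_\pp$ cancels the $r^{-\alpha}$ produced by differentiating $|x|^{-\alpha}u_m$, and Parseval's identity yields the decomposition
\begin{equation*}
Q^\alpha_\pp[u]=2\pi\sum_{m\in\Z}\int_0^\infty\Big|u_m'(r)+\frac{m-\alpha}{r}\,u_m(r)\Big|^2\,r\,dr,
\qquad
\int_{\R^2}\frac{|u|^2}{|x|^2}\,dx=2\pi\sum_{m\in\Z}\int_0^\infty\frac{|u_m(r)|^2}{r}\,dr.
\end{equation*}
The form $Q^\alpha_\m$ is handled identically, taking $v=|x|^{\alpha}u$ and the opposite Cauchy--Riemann operator, which produces $u_m'-\tfrac{m-\alpha}{r}u_m$ and hence the same squared coefficient.

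The heart of the matter is the one-dimensional identity
\begin{equation*}
\int_0^\infty\Big|u_m'+\frac{m-\alpha}{r}\,u_m\Big|^2 r\,dr
=\int_0^\infty|u_m'|^2\,r\,dr+(m-\alpha)^2\int_0^\infty\frac{|u_m|^2}{r}\,dr+(m-\alpha)\big[\,|u_m|^2\,\big]_0^\infty,
\end{equation*}
obtained by expanding the square and integrating the cross term $\tfrac{m-\alpha}{r}(|u_m|^2)'\,r$ by parts. Dropping the nonnegative kinetic term $\int_0^\infty|u_m'|^2 r\,dr$ leaves exactly $(m-\alpha)^2$, provided the boundary contribution vanishes. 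For this I would invoke the domain condition: $u\in d(Q^\alpha_\ppm)$ forces $A_\alpha u\in L^2$, equivalently $\int_{\R^2}|x|^{-2}|u|^2<\infty$, so by Parseval $\int_0^\infty r^{-1}|u_m|^2\,dr<\infty$ for every $m$, and finiteness of the form likewise gives $u_m'\in L^2(r\,dr)$. These two facts imply that $|u_m|^2$ has a limit at $r=0$ and at $r=\infty$, and that limit must be $0$ (a nonzero value would make $r^{-1}|u_m|^2$ non-integrable at the corresponding endpoint); hence the boundary term vanishes. Since $\min_{m\in\Z}(m-\alpha)^2=\mu_\alpha^2$ by \eqref{mu}, summing the per-channel bounds and using the two Parseval identities above gives \eqref{hardy-ab} for $Q^\alpha_\pp$, and verbatim for $Q^\alpha_\m$.

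For sharpness I would test the quotient on a single critical channel. Let $m_0\in\Z$ realize $\mu_\alpha=|m_0-\alpha|$ and set $u_n(r,\theta)=e^{im_0\theta}f_n(r)$; the cross term then vanishes and the Rayleigh quotient equals
\begin{equation*}
\frac{Q^\alpha_\ppm[u_n]}{\displaystyle\int_{\R^2}|x|^{-2}|u_n|^2\,dx}
=\mu_\alpha^2+\frac{\displaystyle\int_0^\infty|f_n'|^2\,r\,dr}{\displaystyle\int_0^\infty r^{-1}|f_n|^2\,dr}.
\end{equation*}
Both integrals are invariant under the substitution $t=\log r$, becoming $\int_\R|g'|^2\,dt$ and $\int_\R|g|^2\,dt$ for $g(t)=f_n(e^t)$; choosing $f_n(r)=g_0(n^{-1}\log r)$ with $g_0\in C_0^\infty(\R)$ a fixed nonzero bump (so $u_n\in C_0^\infty(\R^2\setminus\{0\})\subset d(Q^\alpha_\ppm)$, supported in an annulus) makes the correction term $\mathcal{O}(n^{-2})\to 0$, whence the quotient tends to $\mu_\alpha^2$ and optimality follows. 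The step I expect to be the main obstacle is the boundary-term analysis in the second paragraph: the termwise manipulations must be justified for a general element of the Friedrichs form domain rather than a smooth core, and, crucially, the vanishing of the $r\to0$ boundary term in the $m=0$ channel cannot come from angular regularity alone (which controls only $m\neq0$) but must genuinely be extracted from the finiteness of $\int_{\R^2}|x|^{-2}|u|^2$.
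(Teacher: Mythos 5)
Your proof is correct, and it shares the paper's overall skeleton: the same gauge substitution $v=|x|^{\mp\alpha}u$, the same angular Fourier decomposition, the same reduction to a one-dimensional inequality in each mode followed by Parseval, and sharpness via test functions supported on the extremal channel $m_0$. Where you genuinely differ is in the key per-channel step. The paper invokes the weighted Hardy inequalities of Appendix A (the Muckenhoupt/Tomaselli criterion) with $V(t)=t^{2\alpha-2m+1}$, $W(t)=t^{2\alpha-2m-1}$, getting $(m-\alpha)^2$ for $m\geq 1$ and $\alpha^2$ for $m\leq 0$; you instead expand the square $|u_m'+\tfrac{m-\alpha}{r}u_m|^2$, note that the cross term is the exact derivative $(m-\alpha)(|u_m|^2)'$, and kill the boundary term using the domain condition $\int_{\R^2}|x|^{-2}|u|^2<\infty$ together with $u_m'\in L^2(r\,dr)$ (this does force $|u_m|^2\to 0$ at both endpoints, since its derivative is in $L^1(0,\infty)$ and $r^{-1}|u_m|^2$ is integrable). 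Your route is more elementary and self-contained: it bypasses the appendix entirely and produces the sharp constant $(m-\alpha)^2$ in every channel. But it works only because the Aharonov--Bohm gauge factor is an exact power, so the weight after cancellation is exactly $r$; for the regular fields of Section 2 the cross term is not an exact derivative, and the Muckenhoupt machinery that the paper recycles here is genuinely needed there. Your sharpness construction (a logarithmically dilated bump $f_n(r)=g_0(n^{-1}\log r)$ on the critical channel, error $\mathcal{O}(n^{-2})$ after $t=\log r$) also differs from the paper's piecewise power/plateau/logarithmic-cutoff functions, and is arguably cleaner. Two small caveats: your phrase ``$A_\alpha u\in L^2$, equivalently $\int_{\R^2}|x|^{-2}|u|^2<\infty$'' requires $\alpha\neq 0$, so the case $\alpha=0$ should be disposed of separately (it is trivial since $\mu_0=0$, which is exactly what the paper does); and the obstacle you flag at the end about justifying the manipulations on the Friedrichs domain is not actually an issue, because the form domain is given concretely by \eqref{u-zero}, so your boundary analysis applies verbatim to every element of it.
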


\begin{proof} 
Inequality \eqref{hardy-ab} is trivially satisfied if $\alpha=0$.
Assume for definiteness that $\alpha \in (0,1)$. 
Let $u\in  d(Q^\alpha_\pp)$ and let $v=|x|^{-\alpha} u$. 
We expand $v$ into its Fourier series as in \eqref{v-fourier}. Owing to  
\eqref{eq-timo} we thus obtain the identity 
\begin{equation}  \label{q+ab}
Q_\pp[u]\, = \, 2\pi   \sum_{m\in\Z} \int_0^\infty    r^{1+2\alpha-2m}\, |f'_m(r)|^2\, dr\, ,
\end{equation}
where  $f_m = r^m\, v_m$. Let 
$$
u_m(r) = \frac{1}{2\pi} \int_0^{2\pi}\!\! e^{-im\theta}\, u(r,\theta)\, d\theta
$$
be the Fourier coefficients of $u$. In view of \eqref{u-zero} and the Parseval identity we then have 
\begin{equation}  \label{lim-um}
\liminf_{r\to 0} u_m(r)=\liminf_{r\to \infty} u_m(r)=0 \qquad \forall\, m\in\Z.
\end{equation} 
Now we mimic the proof of Theorem \ref{thm-hardy} and use Theorems \ref{thm-class-1}, \ref{thm-class-2} with 
\begin{equation} \label{VW-1}
V(t) = t^{2\alpha-2m+1}\qquad \text{and}  \qquad W(t) = t^{2\alpha-2m-1} \, .
\end{equation} 
Note that $v_m(r) = r^{-\alpha} \, u_m(r)$. 

\smallskip

\underline{$0< \alpha <1\leq m:$} by \eqref{lim-um} $\liminf_{r\to 0} f_m(r)=0$. Theorem \ref{thm-class-2}  with \eqref{VW-1} gives 
$$
 \int_0^\infty    r^{1+2\alpha-2m}\, |f'_m(r)|^2\, dr\,  \geq \, (m-\alpha)^2  \int_0^\infty    r^{2\alpha-2m-1}\, |f_m(r)|^2\, dr =  (m-\alpha)^2  \int_0^\infty    r^{2\alpha-1}\, |v_m(r)|^2\, dr\ .
$$

%\smallskip

\underline{$m\leq 0:$}  here we have $\liminf_{r\to \infty} f_m(r)=0$. Theorem \ref{thm-class-1}  with \eqref{VW-1} thus implies 
$$
 \int_0^\infty    r^{1+2\alpha-2m}\, |f'_m(r)|^2\, dr\,  \geq \ \alpha^2  \int_0^\infty    r^{2\alpha-1}\, |v_m(r)|^2\, dr\ .
$$
The last two estimates together with equation \eqref{q+ab} and the Parseval identity prove the claim for $Q_\pp[\, \cdot\, ]$.

\smallskip

Now let $u\in  d(Q^\alpha_\m)$ and $v=|x|^{\alpha} u$. Accordingly, we get $v_m(r) = r^{\alpha}\, u_m(r)$, and 
\begin{equation}  \label{q-ab}
Q_\m[u]\, = \, 2\pi   \sum_{m\in\Z} \int_0^\infty    r^{1-2\alpha+2m}\, |f'_m(r)|^2\, dr\, ,
\end{equation}
where  $f_m = r^{-m}\, v_m$. Hence we put
\begin{equation} \label{VW-2}
V(t) = t^{2m-2\alpha+1}\qquad \text{and}  \qquad W(t) = t^{2m-2\alpha-1} \, , 
\end{equation} 
and proceed as above. 

\smallskip

\underline{$m >0:$} we have $\liminf_{r\to \infty} f_m(r)=0$.  Therefore we apply Theorem \ref{thm-class-1}  with $V,W$ as in  \eqref{VW-2}. This gives  
\begin{equation*} 
 \int_0^\infty    r^{1-2\alpha+2m}\, |f'_m(r)|^2\, dr\,  \geq \,  (m-\alpha)^2  \int_0^\infty    r^{2\alpha-1}\, |v_m(r)|^2\, dr\ .
\end{equation*}

\underline{$m \leq 0:$} here we have  $\liminf_{r\to 0} f_m(r)=0$. Hence we use Theorem  \ref{thm-class-2}  with \eqref{VW-2} and obtain  
\begin{equation*} 
 \int_0^\infty    r^{1-2\alpha+2m}\, |f'_m(r)|^2\, dr\,  \geq \,  \alpha^2  \int_0^\infty    r^{2\alpha-1}\, |v_m(r)|^2\, dr\ .
\end{equation*}
Altogether we arrive again at \eqref{hardy-ab}. 

It remains to prove the sharpness of the constant $ \mu^2_\alpha$ on the right hand side of \eqref{h-ab}. 
Let $k\in\{-1,0,1\}$ be such that 
$$
\mu_\alpha = |\alpha-k| .
$$
We construct a sequence of test functions $u_{n,k}$ given by 
\begin{equation} 
u_{n,k}(x) = e^{ik \theta}  \left\{  \begin{array}{ll}
\displaystyle n^\alpha |x|^\alpha &\ \mbox{if \  $ 0 <|x| < \frac 1n$},\\[7pt]
1  & \  \mbox{if \ $  \frac 1n \leq |x| \leq n $}, \\[7pt]
\displaystyle \big(\log (e\, n/|x|)\big)_\pp & \  \mbox{if \  $  n <  |x| $}. 
\end{array}
\right.
\end{equation}
Then $u_{n,k}\in d(Q^\alpha_\ppm)$ for all $n\in\N$, and an elementary calculation shows that 
\begin{equation} 
\lim_{n\to \infty} \frac{Q_\pp[u_{n,k}]}{\displaystyle \int_{\R^2} \frac{|u_{n,k}|^2}{|x|^2}\, dx }\, = 
\lim_{n\to \infty} \frac{Q_\m[u_{n,-k}]}{\displaystyle \int_{\R^2} \frac{|u_{n,-k}|^2}{|x|^2}\, dx } = (\alpha-k)^2= \mu^2_\alpha\, .
\end{equation} 
This proves the claim for $\alpha >0$.  The proof in the case  $\alpha<0$ is completely analogous. 
\end{proof}

\begin{remark}
Notice that the Hardy inequality \eqref{hardy-ab} holds, with the same integral weight, for both components of the Pauli operator. This is compatible with the 
known fact Pauli operator with the Aharonov-Bohm magnetic field has, contrary to regular magnetic fields, no zero modes, see e.g.~\cite[Thm.~3.1]{ev}, \cite[Sec.~3.2]{per}.
\end{remark}

%%%%%%%%%%%%%%%%%%%%%%%%%%%%%%%%%%%%%%%%%%%%%%%%%%%%%%%%%%%%%%%%%%%%%%%%%%%%%%%%
%%%%%%%%%%%

\appendix

%%%%%%%%%%%%%%%%%%%%%%%%%%%%%%%%%%%%%%%%%%%%%%%%
\section{\bf One-dimensional weighted Hardy inequalities}
\label{sec-app1}

\begin{theorem}  \label{thm-class-1}
Let $V,W$ be nonnegative, a.e.-finite, measurable functions
on $(0,\infty)$ such that
\begin{equation} 
\int_s^\infty V(t)^{-1}\, dt \, < \infty \qquad \forall\, s >0.
\end{equation}
Then for any locally absolutely continuous function $f$ on $(0,\infty)$ with $\liminf_{t\to\infty} |f(t)|=0$ we have 
\begin{equation} 
\int_0^\infty W(t)\, |f(t)|^2\, dt \, \leq\, C(V,W) \, \int_0^\infty V(t)\, |f'(t)|^2\, dt,
\end{equation} 
where the constant $C(V,W)$ satisfies 
\begin{align} 
C(V,W) & \leq 4 \, \sup_{s>0} \Big (\int_s^\infty V(t)^{-1}\, dt \Big)  \Big (\int_0^s W(t)\, dt \Big)  \label{C-upperb-1} 
\end{align} 
\end{theorem}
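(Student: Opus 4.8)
The plan is to recognize this as a Muckenhoupt-type weighted Hardy inequality for the operator $f \mapsto \int_{\cdot}^{\infty} f'$, and to produce the sharp factor $4$ directly by a weighted Cauchy--Schwarz estimate followed by Fubini and one integration by parts. Throughout I abbreviate $A := \sup_{s>0}\big(\int_s^\infty V^{-1}\big)\big(\int_0^s W\big)$; if $A=\infty$ or if $\int_0^\infty V|f'|^2=\infty$ there is nothing to prove, so I assume both are finite. First I would set $\Phi(s):=\int_s^\infty V(t)^{-1}\,dt$, which by hypothesis is finite, nonincreasing, satisfies $\Phi(\infty)=0$ and $\Phi'=-V^{-1}$ a.e. The reduction step is to upgrade the $\liminf$ condition to a genuine representation of $f$: choosing $t_n\to\infty$ with $f(t_n)\to 0$ and using that $\int_t^\infty|f'|\le \Phi(t)^{1/2}\big(\int_t^\infty V|f'|^2\big)^{1/2}<\infty$, the tail integral converges, so $f(t)=-\int_t^\infty f'(s)\,ds$ and hence $|f(t)|\le \int_t^\infty |f'(s)|\,ds$.

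The heart of the argument is a Cauchy--Schwarz split with the weight $V\Phi^{1/2}$. Writing $|f'|=\big(|f'|\,V^{1/2}\Phi^{1/4}\big)\big(V^{-1/2}\Phi^{-1/4}\big)$ and using the identity $\int_t^\infty V^{-1}\Phi^{-1/2}=\int_t^\infty(-\Phi')\Phi^{-1/2}=2\,\Phi(t)^{1/2}$, I obtain the pointwise bound
\begin{equation*}
|f(t)|^2 \,\le\, 2\,\Phi(t)^{1/2}\int_t^\infty |f'(s)|^2\,V(s)\,\Phi(s)^{1/2}\,ds .
\end{equation*}
Multiplying by $W(t)$, integrating in $t$, and swapping the order of integration by Tonelli (all integrands are nonnegative) reduces the problem to bounding a kernel:
\begin{equation*}
\int_0^\infty W\,|f|^2 \,\le\, 2\int_0^\infty |f'(s)|^2\,V(s)\,\Phi(s)^{1/2}\Big(\int_0^s W(t)\,\Phi(t)^{1/2}\,dt\Big)\,ds .
\end{equation*}

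It then suffices to show $\int_0^s W\,\Phi^{1/2}\le 2A\,\Phi(s)^{-1/2}$, for this plugs back to give exactly $4A\int_0^\infty V|f'|^2$. I would prove this by integration by parts against $g(t):=\int_0^t W$, using $\frac{d}{dt}\Phi^{1/2}=-\tfrac12\Phi^{-1/2}V^{-1}$, which yields $\int_0^s W\Phi^{1/2}=g(s)\Phi(s)^{1/2}-\lim_{t\to0^+}g(t)\Phi(t)^{1/2}+\tfrac12\int_0^s g\,\Phi^{-1/2}V^{-1}\,dt$. The Muckenhoupt condition is precisely $g(t)\Phi(t)\le A$, so $g(s)\Phi(s)^{1/2}\le A\Phi(s)^{-1/2}$, while $g\Phi^{-1/2}V^{-1}\le A\Phi^{-3/2}(-\Phi')=2A\,\frac{d}{dt}\Phi^{-1/2}$ integrates to at most $2A\,\Phi(s)^{-1/2}$; dropping the nonnegative boundary term at $0$ gives the claimed bound.

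The main obstacle I anticipate is twofold. The delicate point is controlling the behaviour at the origin: $\Phi(0^+)$ may be infinite, so the integration by parts must be arranged so that the only surviving endpoint contribution is the nonnegative limit at $0$ that can be discarded in an upper bound (a cutoff at $\varepsilon>0$ followed by $\varepsilon\to0$ makes this rigorous). The second subtlety is obtaining the \emph{sharp} factor $4$ rather than a weaker constant: the exponent $1/2$ in the weight $V\Phi^{1/2}$ is not arbitrary, and carrying out the same computation with a general weight $V\Phi^{-\beta}$, $\beta\in(-1,0)$, produces the constant $\big[(\beta+1)(-\beta)\big]^{-1}$, whose minimum over this range is attained at $\beta=-\tfrac12$ and equals $4$. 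Verifying that this choice is simultaneously compatible with the convergence of both the inner Cauchy--Schwarz integral and the integration-by-parts kernel is the crux that pins down the constant in \eqref{C-upperb-1}.
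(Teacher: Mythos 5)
Your proof is correct. Note, however, that the paper does not actually prove Theorem \ref{thm-class-1}: it is quoted as a known result, with the proof deferred to Muckenhoupt \cite{muck} and Tomaselli \cite{tom} (see the remark closing Appendix \ref{sec-app1}). So your proposal is not "the same as the paper's proof" — it is a self-contained reconstruction, and what you have reconstructed is essentially Muckenhoupt's classical argument, transposed to the mirrored boundary condition (functions vanishing at $\infty$ rather than at $0$). All the key steps check out: the hypothesis $\int_s^\infty V^{-1}<\infty$ for all $s>0$ together with a.e.-finiteness of $V$ guarantees that $\Phi$ is locally absolutely continuous, strictly positive, and tends to $0$ at infinity, which validates both the representation $f(t)=-\int_t^\infty f'$ (via Cauchy--Schwarz and the $\liminf$ condition along a subsequence) and the exact identity $\int_t^\infty V^{-1}\Phi^{-1/2}=2\Phi(t)^{1/2}$; the Tonelli swap is legitimate by nonnegativity; and the kernel bound $\int_0^s W\Phi^{1/2}\le 2A\,\Phi(s)^{-1/2}$ follows from the $\varepsilon$-cutoff integration by parts exactly as you describe, with the discarded boundary term $-g(\varepsilon)\Phi(\varepsilon)^{1/2}\le 0$ having the right sign. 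Plugging back gives precisely $4A$, matching \eqref{C-upperb-1}. What each route buys: the paper's citation is the economical choice for an appendix, while your argument makes the paper self-contained, exposes exactly where the boundary condition $\liminf_{t\to\infty}|f(t)|=0$ is used (and hence why Theorem \ref{thm-class-2} is its mirror image under $t\mapsto 1/t$), and explains the provenance of the factor $4$ through the optimization over the exponent in the auxiliary weight $V\Phi^{-\beta}$ — information that is invisible if one only cites the literature.
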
 

\medskip

\begin{corollary}  \label{cor-class-1}
Let $R\in (0, \infty)$, and let $V,W$ be nonnegative, a.e.-finite, measurable functions
on $(0,\infty)$ such that
\begin{equation} 
\int_s^\infty V(t)^{-1}\, dt \, < \infty \qquad \forall\, s >R.
\end{equation}
Then for any locally absolutely continuous function $f$ on $(R,\infty)$ with $\liminf_{t\to\infty} |f(t)|=0$ we have 
\begin{equation} 
\int_R^\infty W(t)\, |f(t)|^2\, dt \, \leq\, C(V,W) \, \int_R^\infty V(t)\, |f'(t)|^2\, dt,
\end{equation} 
where the constant $C(V,W)$ satisfies 
\begin{align} 
C(V,W) & \leq 4 \, \sup_{s>R} \Big (\int_s^\infty V(t)^{-1}\, dt \Big)  \Big (\int_R^s W(t)\, dt \Big)  \label{C-upperb-2} 
\end{align} 
\end{corollary}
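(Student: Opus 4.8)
The plan is to deduce Corollary~\ref{cor-class-1} directly from Theorem~\ref{thm-class-1} by the affine change of variables that carries $(R,\infty)$ onto $(0,\infty)$. Concretely, set $\tau=t-R$ and define, for $\tau\in(0,\infty)$,
\[
\widetilde V(\tau)=V(R+\tau),\qquad \widetilde W(\tau)=W(R+\tau),\qquad \widetilde f(\tau)=f(R+\tau).
\]
Since an affine substitution has unit Jacobian and satisfies $\widetilde f{}'(\tau)=f'(R+\tau)$ almost everywhere, both sides of the claimed inequality transform covariantly,
\[
\int_R^\infty W\,|f|^2\,dt=\int_0^\infty \widetilde W\,|\widetilde f|^2\,d\tau,\qquad \int_R^\infty V\,|f'|^2\,dt=\int_0^\infty \widetilde V\,|\widetilde f{}'|^2\,d\tau,
\]
so it suffices to apply Theorem~\ref{thm-class-1} to the triple $(\widetilde V,\widetilde W,\widetilde f)$ on $(0,\infty)$.

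First I would verify that the hypotheses of Theorem~\ref{thm-class-1} hold for the shifted data. Nonnegativity, measurability and a.e.-finiteness of $\widetilde V,\widetilde W$ are inherited from $V,W$ on $(R,\infty)$. Writing $s=R+\sigma$ gives $\int_\sigma^\infty \widetilde V(\tau)^{-1}\,d\tau=\int_s^\infty V(t)^{-1}\,dt$, so the standing assumption $\int_s^\infty V^{-1}\,dt<\infty$ for all $s>R$ is exactly the requirement $\int_\sigma^\infty \widetilde V^{-1}\,d\tau<\infty$ for all $\sigma>0$. Local absolute continuity of $f$ on $(R,\infty)$ passes to $\widetilde f$ on $(0,\infty)$, and $\liminf_{\tau\to\infty}|\widetilde f(\tau)|=\liminf_{t\to\infty}|f(t)|=0$. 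Hence Theorem~\ref{thm-class-1} applies and yields the inequality with a constant $C(\widetilde V,\widetilde W)\le 4\,\sup_{\sigma>0}\bigl(\int_\sigma^\infty \widetilde V^{-1}\bigr)\bigl(\int_0^\sigma \widetilde W\bigr)$.

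It then remains only to translate this constant back. Using $s=R+\sigma$ once more, $\int_\sigma^\infty \widetilde V^{-1}\,d\tau=\int_s^\infty V^{-1}\,dt$ and $\int_0^\sigma \widetilde W\,d\tau=\int_R^{s} W\,dt$, so the supremum over $\sigma>0$ becomes precisely the supremum over $s>R$ appearing in \eqref{C-upperb-2}. This produces exactly the stated bound on $C(V,W)$ and completes the reduction.

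I do not expect a serious obstacle, precisely because the translation transforms the problem exactly. The one point that genuinely needs care — and the reason I prefer the translation over the more naive route — is the behavior at the left endpoint $R$: one cannot in general extend $f$ from $(R,\infty)$ to $(0,\infty)$ by the constant value $f(R^+)$, since $f$ is only locally absolutely continuous on the open interval and that one-sided limit need not exist. The change of variables sidesteps this, as Theorem~\ref{thm-class-1} imposes no condition on $\widetilde V$, $\widetilde W$ or $\widetilde f$ near $0$ beyond measurability and local absolute continuity. Should one instead insist on arguing by extension, the gap is closed by extending $f|_{(R',\infty)}$ by the genuine value $f(R')$ for $R'>R$, applying Theorem~\ref{thm-class-1}, bounding the resulting constant by \eqref{C-upperb-2} via the monotonicity $\int_{R'}^s W\le\int_R^s W$, and finally letting $R'\downarrow R$ through monotone convergence.
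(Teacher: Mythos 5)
Your proof is correct: the paper states Corollary \ref{cor-class-1} without any proof, treating it as an immediate localization of Theorem \ref{thm-class-1} to the half-line $(R,\infty)$, and your translation $t\mapsto t-R$ is exactly the natural formalization of that reduction, with the hypotheses, both integrals, and the constant bound \eqref{C-upperb-2} transforming covariantly as you verify. Your side remark about the left endpoint (that $f$ need not extend continuously to $R$, so a naive constant extension fails) is a genuine subtlety that the change of variables correctly sidesteps.
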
 

\medskip

\begin{theorem}  \label{thm-class-2}
Let $V,W$ be nonnegative, a.e.-finite, measurable functions
on $(0,\infty)$ such that
\begin{equation} 
\int_0^s V(t)^{-1}\, dt \, < \infty \qquad \forall\, s >0.
\end{equation}
Then for any locally absolutely continuous function $f$ on $(0,\infty)$ with $\liminf_{t\to 0} |f(t)|=0$ we have 
\begin{equation} 
\int_0^\infty W(t)\, |f(t)|^2\, dt \, \leq\, C(V,W) \, \int_0^\infty V(t)\, |f'(t)|^2\, dt,
\end{equation} 
where the constant $C(V,W)$ satisfies 
\begin{align} 
C(V,W) & \leq 4 \, \sup_{s>0} \Big (\int_0^s V(t)^{-1}\, dt \Big)  \Big (\int_s^\infty W(t)\, dt \Big)  \label{C-upperb-3} 
\end{align} 
\end{theorem}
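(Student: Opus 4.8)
The plan is to deduce Theorem~\ref{thm-class-2} directly from Theorem~\ref{thm-class-1} by the inversion $t\mapsto 1/t$, which swaps the two endpoints of $(0,\infty)$ and thereby turns a boundary condition at the origin into one at infinity. Given $f,V,W$ as in the statement, I would introduce the reflected data $\tilde f(s)=f(1/s)$, $\tilde V(s)=s^2\,V(1/s)$ and $\tilde W(s)=s^{-2}\,W(1/s)$, and then apply Theorem~\ref{thm-class-1} to the triple $(\tilde f,\tilde V,\tilde W)$.

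The computation that makes this work is the change of variables $t=1/s$, with Jacobian $dt=-s^{-2}\,ds$ and $\tilde f'(s)=-s^{-2}f'(1/s)$. A short calculation then gives the two identities $\int_0^\infty V\,|f'|^2\,dt=\int_0^\infty \tilde V\,|\tilde f'|^2\,ds$ and $\int_0^\infty W\,|f|^2\,dt=\int_0^\infty \tilde W\,|\tilde f|^2\,ds$, so the inequality for $(\tilde f,\tilde V,\tilde W)$ is literally the inequality claimed for $(f,V,W)$. The same substitution yields $\int_s^\infty \tilde V^{-1}=\int_0^{1/s}V^{-1}$ and $\int_0^s \tilde W=\int_{1/s}^\infty W$, so that, after the relabelling $s'=1/s$, the Muckenhoupt supremum in \eqref{C-upperb-1} for $(\tilde V,\tilde W)$ coincides exactly with the supremum in \eqref{C-upperb-3} for $(V,W)$. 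Finally the hypothesis $\liminf_{t\to 0}|f(t)|=0$ becomes $\liminf_{s\to\infty}|\tilde f(s)|=0$, which is precisely the decay assumption required by Theorem~\ref{thm-class-1}. Hence Theorem~\ref{thm-class-1} applied to $(\tilde f,\tilde V,\tilde W)$ delivers the assertion, with the identical constant $4$.

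The only points requiring care are bookkeeping, not substance: tracking the factor $s^{-2}$ correctly, and checking that the reflected data meet the hypotheses of Theorem~\ref{thm-class-1}. Specifically, $\tilde V,\tilde W$ remain nonnegative, measurable and a.e.\ finite; $\tilde f$ is locally absolutely continuous on $(0,\infty)$ because $s\mapsto 1/s$ is a smooth diffeomorphism of $(0,\infty)$; and $\int_s^\infty \tilde V^{-1}<\infty$ for every $s>0$, which is equivalent to $\int_0^{s'}V^{-1}<\infty$, the standing assumption of Theorem~\ref{thm-class-2}. So the reduction is essentially a relabelling, and I do not expect a genuine obstacle.

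If a self-contained argument is preferred, the classical Muckenhoupt route works directly and is where the real content sits. Writing $\Phi(t)=\int_0^t V^{-1}$ and using $\liminf_{t\to 0}|f|=0$ to get $|f(t)|\le\int_0^t|f'|$, one applies Cauchy--Schwarz with the auxiliary weight $u=V\,\Phi^{1/2}$, chosen precisely so that $\int_0^t u^{-1}=2\,\Phi^{1/2}(t)$; after exchanging the order of integration by Tonelli, one integrates by parts in $\int_\tau^\infty W\,\Phi^{1/2}$ and uses $\int_t^\infty W\le A\,\Phi(t)^{-1}$, where $A$ is the supremum in \eqref{C-upperb-3}. The clever choice of $u$ is exactly what produces the sharp factor $4$, and this step is the one that would demand the most attention in a direct treatment.
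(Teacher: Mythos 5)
Your proposal is correct, but note that the paper itself does not prove Theorem~\ref{thm-class-2} at all: both Theorems~\ref{thm-class-1} and~\ref{thm-class-2} are quoted from the literature, with the proofs attributed to \cite{muck,tom} (see also \cite{flw,ok}). Your primary route---the inversion $t\mapsto 1/t$ with $\tilde f(s)=f(1/s)$, $\tilde V(s)=s^{2}V(1/s)$, $\tilde W(s)=s^{-2}W(1/s)$---is a clean and correct reduction: the substitution identities
\[
\int_0^\infty \tilde V\,|\tilde f'|^2\,ds=\int_0^\infty V\,|f'|^2\,dt,
\qquad
\int_0^\infty \tilde W\,|\tilde f|^2\,ds=\int_0^\infty W\,|f|^2\,dt,
\]
together with $\int_s^\infty\tilde V^{-1}=\int_0^{1/s}V^{-1}$ and $\int_0^s\tilde W=\int_{1/s}^\infty W$, all check out, the boundary condition $\liminf_{t\to 0}|f(t)|=0$ becomes $\liminf_{s\to\infty}|\tilde f(s)|=0$ as required, and the Muckenhoupt suprema in \eqref{C-upperb-1} and \eqref{C-upperb-3} coincide after relabelling, so the constant $4$ is preserved. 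The one caveat is that this reduction proves Theorem~\ref{thm-class-2} only \emph{relative to} Theorem~\ref{thm-class-1}, which the paper likewise leaves to the references; by itself it establishes the equivalence of the two statements rather than either one. Your fallback sketch closes that gap: it is essentially Muckenhoupt's original argument (the choice $u=V\Phi^{1/2}$ so that $\int_0^t u^{-1}=2\Phi^{1/2}(t)$, Tonelli, and the integration by parts against the bound $\int_t^\infty W\le A\,\Phi(t)^{-1}$ are precisely the steps that produce the factor $4$), and it is correct as outlined. So your write-up in fact supplies more than the paper does: a proof of one of the two appendix theorems from the other, plus the classical direct proof from the cited sources.
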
 

\medskip

\begin{corollary}  \label{cor-class-2}
Let $R\in (0,\infty)$, and let $V,W$ be nonnegative, a.e.-finite, measurable functions
on $(0,\infty)$ such that
\begin{equation} 
\int_0^s V(t)^{-1}\, dt \, < \infty \qquad \forall\, s <R.
\end{equation}
Then for any locally absolutely continuous function $f$ on $(0,R)$ with $\liminf_{t\to 0} |f(t)|=0$ we have 
\begin{equation} 
\int_0^R W(t)\, |f(t)|^2\, dt \, \leq\, C(V,W) \, \int_0^R V(t)\, |f'(t)|^2\, dt,
\end{equation} 
where the constant $C(V,W)$ satisfies 
\begin{align} 
C(V,W) & \leq 4 \, \sup_{0<s<R} \Big (\int_0^s V(t)^{-1}\, dt \Big)  \Big (\int_s^R W(t)\, dt \Big)  \label{C-upperb-5} 
\end{align} 
\end{corollary}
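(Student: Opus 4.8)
The plan is to derive Corollary \ref{cor-class-2} from Theorem \ref{thm-class-2} by truncating $W$ to the right of $R$ and then letting the truncation point approach $R$ from below. First, we may assume $\int_0^R V\,|f'|^2\,dt<\infty$, since otherwise the claimed inequality holds trivially.

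Fix $R'\in(0,R)$. By hypothesis $\int_0^{R'}V(t)^{-1}\,dt<\infty$, so for $0<a<R'$ the Cauchy--Schwarz inequality gives $\int_a^{R'}|f'|\,dt\le\big(\int_a^{R'}V^{-1}\,dt\big)^{1/2}\big(\int_a^{R'}V\,|f'|^2\,dt\big)^{1/2}<\infty$; hence $f$ extends to an absolutely continuous function on $(0,R']$ and the limit $\ell:=\lim_{t\to R'^-}f(t)$ exists and is finite. I then define on $(0,\infty)$ the extensions $\tilde V=V,\ \tilde W=W,\ \tilde f=f$ on $(0,R')$, and $\tilde V\equiv 1,\ \tilde W\equiv 0,\ \tilde f\equiv \ell$ on $[R',\infty)$. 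These functions meet the assumptions of Theorem \ref{thm-class-2}: $\tilde V,\tilde W$ are nonnegative, a.e.-finite and measurable; $\tilde f$ is locally absolutely continuous on $(0,\infty)$ with $\liminf_{t\to0}|\tilde f|=\liminf_{t\to0}|f|=0$; and $\int_0^s\tilde V^{-1}\,dt<\infty$ for every $s>0$, because it equals $\int_0^sV^{-1}\,dt$ when $s\le R'$ and $\int_0^{R'}V^{-1}\,dt+(s-R')$ when $s>R'$.

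Applying Theorem \ref{thm-class-2} to $\tilde V,\tilde W,\tilde f$ and observing that the supremum defining the constant localizes --- for $s\ge R'$ the factor $\int_s^\infty\tilde W\,dt$ vanishes, while for $0<s<R'$ it equals $\int_s^{R'}W\,dt$ --- together with $\int_0^\infty\tilde V\,|\tilde f'|^2\,dt=\int_0^{R'}V\,|f'|^2\,dt$ and $\int_0^\infty\tilde W\,|\tilde f|^2\,dt=\int_0^{R'}W\,|f|^2\,dt$, yields
\[
\int_0^{R'}W\,|f|^2\,dt\ \le\ 4\sup_{0<s<R'}\Big(\int_0^sV^{-1}\,dt\Big)\Big(\int_s^{R'}W\,dt\Big)\int_0^{R'}V\,|f'|^2\,dt\ \le\ 4\sup_{0<s<R}\Big(\int_0^sV^{-1}\,dt\Big)\Big(\int_s^{R}W\,dt\Big)\int_0^{R}V\,|f'|^2\,dt,
\]
where the last inequality uses $\int_s^{R'}W\,dt\le\int_s^{R}W\,dt$ and $(0,R')\subset(0,R)$. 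Letting $R'\uparrow R$ and invoking monotone convergence on the left-hand side (the integrands $W\,|f|^2\chi_{(0,R')}$ increase to $W\,|f|^2\chi_{(0,R)}$) gives exactly the assertion, with the stated bound on $C(V,W)$.

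The argument is essentially routine; the only genuinely necessary subtlety is that when $\int_0^R V^{-1}\,dt=\infty$ one cannot extend $V$ to all of $(0,\infty)$ while preserving the hypothesis of Theorem \ref{thm-class-2}, so a single direct application is impossible and the truncation at $R'<R$ followed by the limit $R'\uparrow R$ is the point that must be handled with care.
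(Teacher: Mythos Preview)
Your argument is correct. The paper does not actually supply a proof of this corollary: after stating Theorems~\ref{thm-class-1}, \ref{thm-class-2} and Corollaries~\ref{cor-class-1}, \ref{cor-class-2}, it simply says ``The proofs of Theorems \ref{thm-class-1} and \ref{thm-class-2} can be found in \cite{muck, tom}'' and leaves the corollaries as implicit consequences. Your extension-and-truncation argument is exactly the natural way to deduce the corollary from the theorem, and your handling of the possibility $\int_0^R V^{-1}\,dt=\infty$ via the intermediate cut-off at $R'<R$ followed by monotone convergence is the right thing to do.

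One minor expository point: since $R'<R$ and $f$ is locally absolutely continuous on $(0,R)$, the value $f(R')$ already exists and $f$ is already absolutely continuous on every $[a,R']$ with $0<a<R'$; there is nothing to ``extend'' and the Cauchy--Schwarz step is superfluous. This does not affect the validity of the proof.
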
 

The proofs of Theorems \ref{thm-class-1} and \ref{thm-class-2} can be found in \cite{muck, tom}. We refer also to \cite{flw} to the monograph \cite{ok}. 

%\newpage

\section*{Acknowledgements}

L. Fanelli is partially supported 
by the Basque Government through the BERC 2022--2025 program 
and 
by the Spanish Agencia Estatal de Investigación
through BCAM Severo Ochoa excellence accreditation CEX2021-001142-S/MCIN/AEI/10.13039/501100011033.

L. Fanelli is also supported by the projects PID2021-123034NB-I00/MCIN/AEI/10.13039/501100011033 funded by the Agencia Estatal de Investigación, 
IT1615-22 funded by the Basque Government.

\bigskip

\bibliographystyle{amsalpha}

\begin{thebibliography}{1}

\bibitem{at} R.~Adami, A.~Teta: On the Aharonov-Bohm Hamiltonian. {\em Letters in Mathematical Physics}  {\bf 43} (1998), 43--54.

\bibitem{ab} Y.~Aharonov, D.~Bohm: Significance of electromagnetic potentials in quantum theory. {\em Physical Review} {\bf 115} (3) (1959) 485–491.

\bibitem{ar} A.~Arai: Properties of the Dirac-Weyl operator with a strongly singular gauge potential.  {\em J. Math. Phys.} {\bf 34} (1993) 915--935.

\bibitem{bal} A.~Balinsky: Hardy type inequalities for Aharonov-Bohm magnetic potentials with multiple singularities. {\em  Math. Res. Lett.} {\bf 10} (2003) 169--176. 

\bibitem{bel} A.~Balinsky, W.D.~Evans, R.~Lewis: \textit{The Analysis and Geometry of Hardy's Inequality}. Springer International Publishing Switzerland, 2015.

\bibitem{bls} A.~Balinsky, A.~Laptev and A.~Sobolev: Generalized Hardy inequality for the magnetic Dirichlet forms.  {\em  J. Statist. Phys.} {\bf 116} (2004) 507--521.

\bibitem{bcf} W.~Borelli, M.~Correggi, D.~Fermi:  Pauli Hamiltonians with an Aharonov-Bohm Flux. Preprint, arXiv: 2312.11971.

\bibitem{ck} C. Cazacu and D. Krej\v ci\v r\'{\i}k: The Hardy inequality and the heat equation with magnetic field in any dimension.  {\em Comm.~PDE.} {\bf 41} (2016) 1056--1088.

\bibitem{cfk} L.~Cossetti, L. Fanelli, D.~Krej\v ci\v r\'{\i}k:
Absence of Eigenvalues of Dirac and Pauli Hamiltonians via the Method of Multipliers. {\em Comm. Math. Phys.} {\bf 379(2)} (2020), 633--691.

\bibitem{cfks} H.L. Cycon, F.G. Froese, W. Kirsch, B. Simon: {\it Schr\"odinger Operators with Application to Quantum Mechanics and Global
    Geometry}. Springer Verlag, 1987.

\bibitem{ek} T. Ekholm, H. Kova\v r\'{\i}k: Stability of the magnetic Schr\"odinger operator in a waveguide. {\em Comm.~PDE} {\bf 30} (2005) 293--312.

\bibitem{ep} T. Ekholm, F. Portmann: A magnetic contribution to the Hardy inequality. {\em J. Math. Phys.} {\bf 55} (2014) Art. 022101, 16 pp.

\bibitem{et} M.~ Enstedt, K.~Tintarev: Weighted spectral gap for magnetic Schrödinger operators with a potential term. {\em  Potential Anal.} {\bf 31} (2009) 215--226. 

\bibitem{ev} L.~Erd\H os, V.~Vougalter: Pauli operator and Aharonov-Casher theorem for measure valued magnetic fields. {\em  Comm. Math. Phys.} {\bf 225} (2002)  
 399--421. 

\bibitem{fklv} L.~Fanelli, D.~Krej\v ci\v r\'{\i}k, A.Laptev, L.~Vega: On the improvement of the Hardy inequality due to
singular magnetic fields, {\em Comm.~PDE.} {\bf 45} (2020) 202–1212.

\bibitem{flw} R.~Frank, A.~Laptev and T.~Weidl:  An improved one-dimensional Hardy inequality. {\em J. Math. Sci.} {\bf 268} (2022) 323--342.

\bibitem{gs} V.~Geyler, P.~\v S\v tov\'{\i}\v cek: On the Pauli operator for the Aharonov–Bohm effect with two solenoids.  {\em J. Math. Phys.} {\bf 45} (2004) 51--75.

\bibitem{kov}  H. Kova\v r\'{\i}k: Eigenvalue bounds for two-dimensional magnetic  Schr\"odinger operators. {\em J. Spectr. Theory} {\bf  1} (2011) 363--387.  

\bibitem{lw} A.~Laptev and T.~Weidl: Hardy inequalities for
magnetic Dirichlet forms. {\em Oper. Theory Adv. Appl.} {\bf 108} (1999) 299--305.

\bibitem{LL} E. H. Lieb, M. Loss, \textit{Analysis}. Second edition. Graduate Studies in Mathematics, 14. American Mathematical Society, Providence, RI, 2001.

\bibitem{muck} B.~Muckenhoupt: Hardy’s inequality with weights. {\em Stud. Math.} {\bf 44} (1972) 31--38.

\bibitem{ok} B. Opic and A. Kufner: {\it Hardy-Type Inequalities}. John Wiley \& Sons, New York (1990). 

\bibitem{per} M.~Persson: On the Aharonov–Casher formula for different self-adjoint extensions
of the Pauli operator with singular magnetic field. {\em Electron. J. Differ. Equ.} {\bf 55} (2005) 1--16 (electronic).

\bibitem{tom} G. Tomaselli: A class of inequalities. {\em Boll. Unione Mat. Ital., IV} {\bf 2} (1969) 622--63.

\bibitem{weidl} T. Weidl:  Remarks on virtual bound states for semi-bounded
operators. {\em Comm.~PDE.} {\bf 124} (1999) 25--60.

\bibitem{weidl2} T. Weidl: A remark on Hardy type inequalities for critical Schr\"odinger operators with magnetic fields. {\em Op. Theory: Adv. and Appl.} {\bf 110} (1999) 247--254.

\end{thebibliography}

\end{document}